\newtheorem{thm}{Theorem}%[section]
\newtheorem{rem}{Remark}
\newcommand{\figwidth}{0.7\textwidth}
\journalname{Journal of Mathematical Biology}
\begin{document}

\title{The Limiting Dynamics of a Bistable Molecular Switch With and Without Noise}

\titlerunning{The Limiting Dynamics of a Bistable Molecular Switch}

\author{Michael C. Mackey\and Marta Tyran-Kami\'nska}

\date{\today}

\institute{Michael C. Mackey \at Departments of Physiology, Physics \& Mathematics and Centre for Applied Mathematics in Bioscience and Medicine (CAMBAM), McGill University, 3655 Promenade Sir William Osler, Montreal, QC, CANADA, H3G 1Y6
\\ \email{michael.mackey@mcgill.ca}\and Marta Tyran-Kami\'nska \at Institute of Mathematics,
University of Silesia, Bankowa 14, 40-007 Katowice, POLAND\\ \email{mtyran@us.edu.pl} }

%\cortext[cor1]{Corresponding author}

\maketitle

\begin{abstract}
We consider the dynamics of a population of organisms containing two mutually  inhibitory gene regulatory
networks, that can result in a bistable switch-like behaviour.  We completely characterize their local and
global dynamics in the absence of any noise, and then go on to consider the effects of either noise coming from
bursting (transcription or translation), or Gaussian noise in molecular degradation rates when there is a dominant slow variable in the
system.  We show analytically how the steady state distribution in the population can range from a single
unimodal distribution through a bimodal distribution and give the explicit analytic form for the invariant
stationary density which is globally asymptotically stable.   Rather remarkably, the behaviour of the stationary density with respect to the parameters characterizing the molecular behaviour of the bistable switch is qualitatively identical in the presence of noise coming from bursting as well as in the presence of Gaussian noise in the degradation rate.  This implies that one cannot distinguish between either the dominant source or nature of noise based on the stationary molecular distribution in a population of cells.  We finally show that the switch model with bursting but two dominant slow genes has an asymptotically stable stationary density.

\end{abstract}

\keywords{Stochastic modelling, bistable switch, mutual repression}

\section{Introduction}\label{sec:intro}

In electrical circuits there are only two elementary ways to produce bistable behavior. Either with positive
feedback (e.g. A stimulates B and B stimulates A) or with double negative feedback (A inhibits B and B inhibits
A).  This elementary fact, known to all electrical engineering students, has, in recent years, come to the
attention of molecular biologists who have rushed to implicate one or the other mechanism as the source of
putative or real bistable behavior in a variety of biological systems.  (In a gene regulatory framework we might term the double positive feedback switch an inducible switch, while the double negative feedback switch could be called a repressible switch.)  Some laboratories have used
this insight to engineer {\it in vitro} systems to have bistable behavior and one of the first was \cite{Gardner2000}
who engineered repressible switch like behavior of the type we study in this paper.  Some especially well written  surveys are to be found in \cite{ferrell-2002}, \cite{tyson-2003}, and \cite{ferrell-2004}.

Gene regulatory networks are, however, noisy affairs for a variety of reasons and it is now thought that this noise may actually play a significant role in determining function \citep{eldar2010}.
In such noisy dynamical systems experimentalists
will often take a populational level approach and infer the existence of underlying bistable behavior based on
the existence of bimodal  densities of some molecular constituent over some range of experimental parameter
values.

From a modeling perspective there have been a number of studies attempting to understand the effects of noise on gene regulatory dynamics. The now classical \cite{kepler01} really laid much of the ground work for subsequent studies by its treatment of a variety of noise sources and their effect on dynamics. \cite{mty-2011} examined the effects of either bursting or Gaussian noise on both inducible and repressible operon models, and \cite{waldherr2010} looked at the role of Gaussian noise in an inducible switch model for ovarian follicular growth.

One of the most interesting situations is the observation that the presence of noise may induce bistability in a gene regulatory model when it was absolutely impossible to have bistable behaviour in the absence of noise.  This has been very nicely explored by \cite{artyomov07} (in competing positive/negative feedback motifs), and \cite{samoilov-05} (in enzymatic futile cycles),  while \cite{qian2009} and \cite{bishop2010}  analytically explored noise induced bistability, the latter in a phosphorylation-dephosphorylation cycle model.  \cite{vellela2008}  examined the role of noise in shaping the dynamics of the bistable Schl\"{o}gl chemical kinetic model.

For bistable repressible switch models \cite{wang-07} examined quorum-sensing with degradation rate noise in phage $\Lambda$ while \cite{morelli-08a}   examined the role of noise in protein production rates.  \cite{morelli-08b} carried out numerical studies of repressible switch slow dynamics in the face of noise.  \cite{bokes-13} gave a nice overview of the various approaches to the modeling of these systems and then  examined the role of transcriptional/translational bursting in repressible and inducible systems 
as well as in a repressible switch.
\cite{caravagna-13} examined the effects of bounded Gaussian noise on mRNA production rates in a repressible switch model, while \cite{strasser12} have looked at a model for the Pu/Gata switch (a repressible switch implicated in hematopoietic differentiation decision making) with high levels of protein and low levels of DNA.  

In this paper, we  extend the work of \cite{mty-2011} on inducible and repressible systems to an analytic consideration of an inducible switch in the presence of either bursting transcriptional (or translational) noise or Gaussian noise.
The paper is organized as follows. Section \ref{sec:generic} lays the groundwork by
developing the deterministic model  based on ordinary differential equations (a generalization of \cite{grigorov1967model}, the earliest study we know of, and  \cite{cherry-2000}) that we use to consider  the influence of noise.  This is followed in Section \ref{sec:ss-dynam}
with an analysis of the deterministic system, including the coexistence of multiple steady
states, and their stability.  This section, though superficially similar to the treatment of \cite{mty-2011}, extends their results to a completely different situation than previously considered, namely a model for a repressible switch. Section \ref{ssec:fast-slow} briefly considers how the
existence of fast and slow variables enables the simplification of the dynamics, and consequently  makes computations tractable, while the following Section
\ref{sec:dynamics-single} introduces bursting transcriptional or translational noise and derives the stationary
population density in a variety of situations when there is a single dominant slow variable.  We not only give explicit analytic expressions for these stationary densities, but also show that they are globally asymptotically stable. Section \ref{sec:dynamics-degrad} considers an alternative situation in which there is Gaussian distributed noise in the degradation rate for a single slow variable.
We again give the analytic form for the stationary densities as well as demonstrating their stability.  Section \ref{sec:2-dom-burst} expands on Section \ref{sec:dynamics-single} by considering bursting transcription or translation but in the situation where there are two dominant slow variables. The models in Sections \ref{sec:dynamics-single}-\ref{sec:2-dom-burst} are expressed as stochastic differential equations. The paper concludes
with a short discussion.

\section{The bistable genetic switch}\label{sec:generic}

\subsection{Biological background}\label{ssec:concept}

The paradigmatic molecular biology example of a bistable switch due to reciprocal negative feedback is the
bacteriophage (or phage) $\lambda$, which is a virus capable of
infecting {\it Escherichia coli} bacteria.  Originally described in \cite{jacob-1961} and very nicely treated in \cite{ptashne-1986}, it is but one of scores of  mutually inhibitory bistable switches that have been found since.

\subsection{Model development}\label{ssec:model-devel}

Figure \ref{fig:Fig1}  gives a cartoon representation of the situation we are  modeling here,  which is a generalization of the work of \cite{grigorov1967model} and \cite{cherry-2000}. { The original postulate for the hypothetical regulatory network of Figure \ref{fig:Fig1} is to be found in the lovely paper \citep{monod-jacob-61} which treats a number of different molecular control scenarios, and } the reader
may find reference to that figure helpful while following the model development below.  { It should be noted that with the advent of the power of synthetic biology it is now possible to construct molecular control circuits with virtually any desired configuration and thereby experimentally investigate their dynamics \citep{hasty2001}.}
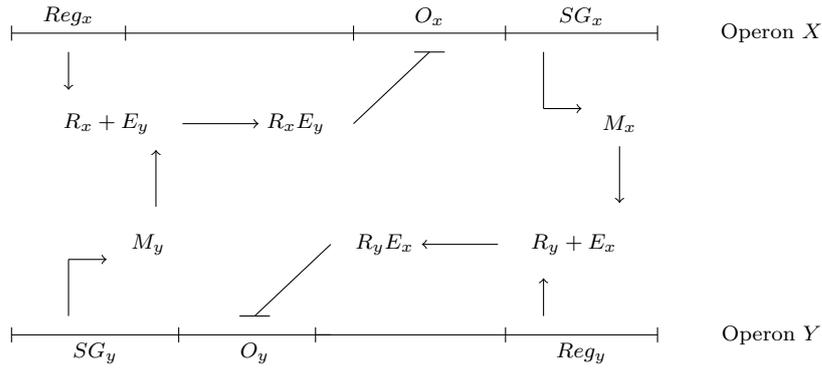
\begin{figure}[htb]
\begin{center}
\begin{tikzpicture}
    \draw (0,0) --(1.5,0) node[midway,above] {$Reg_x$};
    \draw (0,-0.1)--(0,0.1);
    \draw (1.5,-0.1)--(1.5,0.1);
    \draw (1.5,0) -- (4.5,0);
    \draw (4.5,-0.1)--(4.5,0.1);
    \draw (6.5,-0.1)--(6.5,0.1);
    \draw (4.5,0) --(6.5,0) node[midway,above] {$O_x$};
    \draw (6.5,0) --(8.5,0) node[midway,above] {$SG_x$};
    \draw (8.5,-0.1)--(8.5,0.1);
    \node at (10 ,0) {Operon $X$};
    \draw[->] (0.75,-0.25) -- (0.75,-.75);
    \node at (1.25,-1.2) {$R_x+E_y$};
    \draw[->] (2.25,-1.2)--(3.25,-1.2) node[right] {$R_xE_y$};
    \draw (4.5,-1.2) -- (5.5,-0.25);
    \draw (5.3,-0.25)-- (5.7,-0.25);
    \draw (0,-4) --(2.2,-4) node[midway,below] {$SG_y$};
    \draw (0,-4.1)--(0,-3.9);
    \draw (2.2,-4.1)--(2.2,-3.9);
    \draw (2.2,-4) -- (4.2,-4) node[midway,below] {$O_y$};
    \draw (4,-4.1)--(4,-3.9);
    \draw (4,-4)--(6.5,-4);
    \draw (6.5,-4.1) --(6.5,-3.9);
    \draw (6.5,-4) --(8.5,-4) node[midway,below] {$Reg_y$};
    \draw (8.5,-4.1) --(8.5,-3.9);
    \node at (10 ,-4) {Operon $Y$};
    \node at (8.0,-1.2) {$M_x$};
    \draw (7.0,-0.25) -- (7.0,-1.0);
    \draw[->] (7.0,-1.0)--(7.5,-1.0);
    \node at (7.4,-2.8) {$R_y+E_x$}; \draw[->] (7.0,-3.75)--(7.0,-3.25);
    \draw[->] (8.0,-1.5)--(8.0,-2.25); \draw[->] (6.4,-2.8)--(5.4,-2.8) node[left] {$R_yE_x$}; %\node at
    \node at (1.8,-2.8) {$M_y$};
    \draw (4.2,-2.8) -- (3.2,-3.75);
    \draw (3,-3.75)-- (3.4,-3.75);
    \draw (0.75,-3.0) -- (0.75,-3.75);
    \draw[->] (0.75,-3.0)--(1.25,-3.0);
    \draw[->] (1.9,-2.3)--(1.9,-1.55);
\end{tikzpicture}
\end{center}
\caption{A schematic depiction of the elements of a bistable genetic switch, following \cite{monod-jacob-61}. There are two operons ($X$ and
$Y$).  For each, the regulatory region ($Reg_x$ or $Reg_y$) produces a repressor molecule ($R_x$ or $R_y$) that
is inactive unless it is combined with the effector produced by the opposing operon ($E_y$ or $E_x$
respectively).  In the combined form ($R_xE_y$ or $R_yE_x$) the repressor-effector complex binds to the operator
region ($O_x$ or $O_y$ respectively) and blocks transcription of the corresponding structural gene ($SG_x$ or
$SG_y$).  When the operator region is {\it not} complexed with the active form of the repressor, transcription
of the structural gene can take place and mRNA ($M_x$ or $M_y$) is produced.  Translation of the mRNA then
produces an effector molecule ($E_x$ or $E_y$). These effector molecules then are capable of interacting with
the repressor molecule of the opposing gene.   See \cite{monod-jacob-61}. } \label{fig:Fig1}
\end{figure}

\citet{polynikis09} offers a nice survey of techniques applicable to the approach we take in this section. We consider two operons $X$ and $Y$ such that the `effector' of $X$, denoted by $E_x$, inhibits the transcriptional production of mRNA from operon $Y$ and vice versa.  We take the approach of \cite{goodwin1965} as extended and developed in \citep{Griffith68a,Griffith68b,othmer76,selgrade79}. Consider initially a single operon $a$
where $a \in \{x,y\}$ and denote by $\bar a\in \{y,x\}$ the opposing operon.  For the mutually repressible systems we consider here, in the {\it presence} of the
effector molecule $E_{a}$   the repressor $R_{\bar a}$ is {\it active} (able to bind to
the operator region), and thus block DNA transcription.
The effector binds  with the inactive form $R_{\bar a}$ of the repressor, and when bound to the effector the repressor becomes active. We take
this reaction to be in equilibrium and of the form
\begin{equation}
R_{\bar a} + n_{\bar a}E_{a}  \rightleftharpoons R_{\bar a}E_{an_{\bar a}}  \label{e:equ}.
\end{equation}
Here, $R_{\bar a}E_{an_{\bar a}} $ is a repressor-effector complex and $n_{\bar a}$ is the number  of effector molecules that inactivate the repressor $R_{\bar a}$.
If we let the mRNA and  effector concentrations be denoted by $(M_a,E_a)$ then we assume
that the dynamics for operon $a$ are given by
   \begin{align}
    \dfrac{dM_a}{dt} &= \bar b_{d,a} \bar \varphi_{m,a}f_{a}( E_{\bar a}) -\gamma_{M_a} M_a, \quad \bar a \in \{y,x\} \label{eq:mrna}\\
    \dfrac{dE_a}{dt} &= \beta_{E_a}   M_a -\gamma_{E_a} E_a.\label{eq:effector}
    \end{align}
It is assumed in \eqref{eq:mrna} that the rate of mRNA production is proportional to the fraction of time the operator region is
active and that the maximum level
of transcription is $ \bar b_{d,a}$, and that the effector production rate is proportional to the amount of mRNA.  Note that  the production of $M_x$ is regulated by $E_y$ and vice versa, and that the components $(M_a,E_a)$ are subject to degradation\footnote{The more precise form for \eqref{eq:effector} would be
$$
\dfrac{dE_a}{dt} = \beta_{E_a}   M_a -\gamma_{E_a} E_a
    -n_{\bar a} k_{1,\bar a}R_{\bar a}\cdot E_a^{n_{\bar a}} + n_{\bar a} k_{-1,\bar a} [R_{\bar a}E_{an_{\bar a}}]
$$
where $k_{1,\bar a}/k_{-1,\bar a}$ is the equilibrium constant.  The equilibrium assumption means that the last two terms cancel.}. The function $f$ is
calculated next.

To compute $f$ we temporarily suppress the subscript $a$ and then restore it at the end. Let the corresponding reaction in \eqref{e:equ} and the equilibrium constant be
\[
R + nE \stackrel{K_1} \rightleftharpoons RE_n \qquad K_1 = \frac{RE_n}{R \cdot E^n}.
\]
 There is an interaction between the  operator $O$ and
repressor $R$ described by
\begin{equation*}
O + RE_n \stackrel{K_2}\rightleftharpoons ORE_n \qquad K_2 = \frac{ORE_n} {O
\cdot RE_n}. \label{or-rep}
\end{equation*}
The  total operator is given by
\begin{equation*}
O_{tot} = O + ORE_n = O +   K_1 K_2 O \cdot R\cdot E^n   = O(1+  K_1 K_2R\cdot E^n), \label{rep-totoper}
\end{equation*}
while the total repressor $R_{tot}$ is
\begin{equation*}
R_{tot} = R + K_1 R \cdot E^n + K_2 O \cdot RE_n, \label{ind-totrepre}
\end{equation*}
so the fraction of operators not bound by repressor is given by
\begin{equation*}
f(E) = \frac{O}{O_{tot}} = \frac{1}{1+   K_1 K_2R\cdot E^n   }. \label{rep-frac1}
\end{equation*}
 If the amount  repressor bound to the operator is small compared to the total amount of repressor then  { $R_{tot} \simeq R(1 + K_1  \cdot E^n )$ and consequently}
 \begin{equation*}
f(E) = \frac{1+ K_1E^n}{1+(K_1+   K_1 K_2R_{tot}   )E^n  } = \frac{1+
K_1E^n}{1 + KE^n}, \label{rep-frac2}
%\fbox{frac2}
\end{equation*}
where  $K =    K_1(1+ K_2R_{tot}) $. When $E$ is large there will be maximal repression, but even then there will still
be a basal level of mRNA production proportional to $K_1 K^{-1}<1$ (this is known as leakage). The variation of
the DNA transcription rate with effector level is given by $\varphi = \bar \varphi_{m} f$ or
    \begin{equation}
    \varphi(E) = \bar \varphi_{m}  \frac{1+
    K_1E^n}{1 + KE^n}  =
    \bar \varphi_{m}f(E),
    \label{rep-phi}
    \end{equation}
where  $\bar \varphi_{m}$ is the maximal DNA transcription rate (in
units of inverse time).

Now explicitly including the proper subscripts we have
    \begin{equation*}
    \varphi_a(E_{\bar a}) = \bar \varphi_{m,a}  \frac{1+
    K_{1,a}E_{\bar a}^{n_a}}{1 + K_aE_{\bar a}^{n_a}}  =
    \bar \varphi_{m,a}f_a(E_{\bar a}),
    \label{rep-phi-subscript}
    \end{equation*}
where { $K_a = K_{1,a}(1 + K_{2,a} R_{tot,a})$}.

We next rewrite Equations
\ref{eq:mrna}-\ref{eq:effector}   by defining dimensionless
concentrations.  Equation
\ref{rep-phi} becomes
    \begin{equation*}
    \varphi_a(e_{\bar a}) =
    \varphi_{m,a}f_a(e_{\bar a}),
    \label{eq:gen-response-fun-dimen}
    \end{equation*}
where the dimensionless rate $\varphi_{m,a}$ is defined by
    \begin{equation*}
    \varphi_{m,a} = \dfrac {\bar \varphi_{m,a} \beta_{E,a} }{\gamma_{M,a} \gamma_{E,a} } \quad \mbox {and} \quad f_a(e_{\bar a})=
    \dfrac{1+e_{\bar a}^{n_a}}{1 + \Delta_a e_{\bar a}^{n_a}},
    \label{eq:gen-response-fun-dimen-f}
    \end{equation*}
$\Delta_a = K_a K_{1,a}^{-1}$, and the
dimensionless effector concentration $(e_a)$ is defined by
    \begin{equation*}
    E_a = \eta_a e_{\bar a}   \quad \mbox{with} \quad\eta_a = \dfrac {1}{\sqrt[n_a]{K_{1,a}}}.
    \end{equation*}
Recall that $\Delta_a^{-1}$ denotes the leakage and note that if $\Delta_a$ goes to infinity then the transcription goes to zero. 
Similarly using a dimensionless  mRNA
concentration ($m_a$) given by
    \begin{equation*}
      M_a = m_a
    \eta_a \dfrac{\gamma_{E_a}  }{\beta_{E_a}  },
    \end{equation*}
Equations \ref{eq:mrna}-\ref{eq:effector} take the form
 \begin{align*}
    \dfrac{dm_a}{dt} &= \gamma_{M_a} [ \kappa_{d,a} f_a( e_{\bar a}) - m_a],\\
    %\dfrac{di_a}{dt} &= \gamma_{I_a} ( m_a - i_a),  \\
    \dfrac{de_a}{dt} &= \gamma_{E_a} (m_a - e_a),
    \end{align*}
with
$$
    \kappa_{d,a} = b_{d,a} \varphi_{m,a}, \quad \mbox{and} \quad b_{d,a} =\dfrac {\bar b_{d,a}}{\eta_a}
$$
which are both dimensionless.

Thus the equations governing the dynamics of this system are given by the four differential equations
\begin{align*}
    \dfrac{dm_x}{dt} &= \gamma_{M_x} [ \kappa_{d,x} f_x( e_{y}) - m_x],\\
      \dfrac{de_x}{dt} &= \gamma_{E_x} (m_x - e_x),\\
    \dfrac{dm_y}{dt} &= \gamma_{M_y} [ \kappa_{d,y} f_y( e_x) - m_y],\\
     \dfrac{de_y}{dt} &= \gamma_{E_y} (m_y - e_y)
    \end{align*}
 where
 $$
  f_x(e_{y})=
    \dfrac{1+e_{y}^{n_x}}{1 + \Delta_x e_{y}^{n_x}} \quad \mbox{and} \quad
 f_y(e_{x})=
    \dfrac{1+e_{x}^{n_y}}{1 + \Delta_y e_{x}^{n_y}}.
$$

To make the model equations somewhat more straightforward, denote dimensionless
concentrations by $(m_x, e_x,m_y, e_y) = (x_1,x_2, y_1,y_2 )$ (with obvious changes in the other subscripts) to obtain
    \begin{align}
    \dfrac{dx_1}{dt} &= \gamma_{x_1} [\kappa_{d,x} f_x( y_2) -x_1], \label{eq:xmrna1-dimen}\\
    \dfrac{dx_2}{dt} &= \gamma_{x_2}  (  x_1 - x_2), \label{eq:xintermed2-dimen}\\
    \dfrac{dy_1}{dt} &= \gamma_{y_1} [\kappa_{d,y} f_y( x_2) -y_1], \label{eq:ymrna1-dimen}\\
    \dfrac{dy_2}{dt} &= \gamma_{y_2}  (  y_1 - y_2), \label{eq:yintermed2-dimen}%\\
       \end{align}
Throughout, $\gamma_{\cdot}  $  is a  decay
rate  (time$^{-1}$), and so Equations
\ref{eq:xmrna1-dimen}-\ref{eq:yintermed2-dimen} are not dimensionless.  In addition to the loss rates explicitly appearing, we have the parameters $\kappa_{d,x}, \kappa_{d,y}$.  Since
\begin{equation}
 f_x(y_2)=
    \dfrac{1+y_2^{n_x}}{1 + \Delta_x y_2^{n_x}} \quad \mbox{and} \quad
 f_y(x_2)=
    \dfrac{1+x_2^{n_y}}{1 + \Delta_y x_2^{n_y}},
 \label{eq:control-dimen}
 \end{equation}
we have as well the four parameters $\Delta_x, \Delta_y, n_x, n_y$ to consider.  Note that
\begin{equation*}
f_x(0) = 1, \, \lim_{y_2 \to \infty} f_x(y_2) = \Delta_x^{-1} <1, \,f_y(0) = 1, \,   \lim_{x_2 \to \infty} f_y(x_2) = \Delta_y^{-1} <1.
\end{equation*}

\section{Steady states and dynamics}\label{sec:ss-dynam}
The dynamics of this model for a bistable switch can be analyzed as follows.  This section is an elaboration of aspects of the work presented in \cite{cherry-2000}. Set $W=(x_1,x_2,y_1,y_2)$ so the
system (\ref{eq:xmrna1-dimen})-(\ref{eq:yintermed2-dimen}) generates a flow $S_t(W)$.   The flow $S_t(W^0) \in
\mathbb{R}_4^+$ for all initial conditions $W^0=(x_1^0,x_2^0,,y_1^0,y_2^0) \in \mathbb{R}_4^+$  and $t>0$.

The steady states of the system (\ref{eq:xmrna1-dimen})-(\ref{eq:yintermed2-dimen})  are given by
$x_1^*=x_2^*=x^*,y_1^*=y_2^*=y^*$ where $(x^*,y^*)$ is the solution of
    \begin{align}
    x_1 &= x_2  = \kappa_{d,x}f_x(y_2) \label{eq:ss1} \\
    y_1 &= y_2 = \kappa_{d,y}f_y(x_2). \label{eq:ss2}
     \end{align}
For each solution $(x^*,y^*)$ of (\ref{eq:ss1})-(\ref{eq:ss2}) there is a  steady state $W^*$ of the model, and
the parameters $(\kappa_{d,x}, \kappa_{d,y},\Delta_x, \Delta_y, n_x, n_y)$ will determine whether $W^*$ is
unique or has multiple values.

\subsection{Graphical investigation of the steady states}
Figure \ref{fig:fig2} gives a graphical picture of the five qualitative possibilities for steady state solutions
of the pair of equations (\ref{eq:xmrna1-dimen})-(\ref{eq:yintermed2-dimen}).

\begin{figure}[htb]\centering
\includegraphics[width=\figwidth]{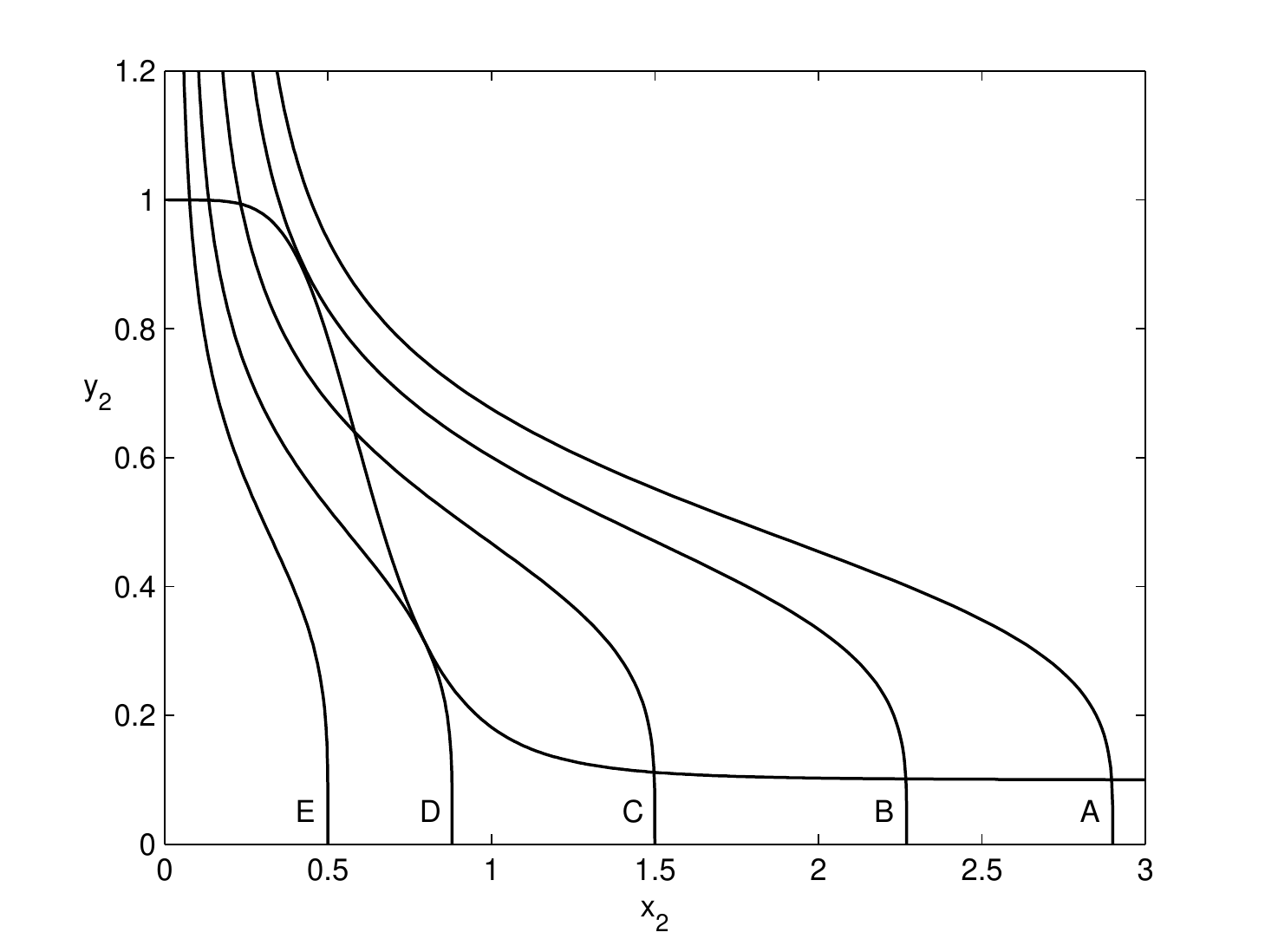}
\caption{A graphical representation of the possible steady state solutions of Equations \ref{eq:ss1} and
\ref{eq:ss2}.  We have plotted the $y_1$ and $x_1$ isoclines ($ y_2 = \kappa_{d,y}f_y(x_2)$ and $x_2 = \kappa_{d,x}f_x(y_2)$ respectively), and
assumed that the $y_1$ isocline (the graph of $ y_2 = \kappa_{d,y}f_y(x_2)$)  is not changed but that $x_1$ isocline (the graph of $x_2 =
\kappa_{d,x}f_x(y_2)$) is varied as indicated by the labels A to E, {\it e.g.} by decreasing  $\kappa_{d,x}$. (A) There is
a single steady state at a large value of $x_2$ and a correspondingly small value of $y_2$.  In this case operon
$X$ of the bistable switch is in the ``ON" state while operon $Y$ is in the ``OFF" state. This steady state is
globally stable.  (B) A decrease in  $\kappa_{d,x}$ now leads to a situation in which there are two steady
states, the largest (locally stable one) corresponding to the intersection of the two graphs, and the second
smaller (half stable) one where the two graphs are tangent.  (C) Further decreases in $\kappa_{d,x}$ now result
in three steady states.  For the largest (locally stable) one the operon $X$ is in the on state while $Y$ is in
the off state.  The smallest one (also locally stable) corresponds to operon $Y$ in the ON state and $X$ is in
the OFF state.  The intermediate steady state is unstable.  (D) This case is like B in that there are two steady
states, one (locally stable) defined by the intersection of the two graphs in which $Y$ is ON and the second at
the tangency of the two graphs is again half stable.   (E)  Finally, for sufficiently small $\kappa_{d,x}$ there
is a single globally stable steady state in which $Y$ is ON and $X$ is OFF.}\label{fig:fig2}
\end{figure}

An alternative, but equivalent, way of examining the steady state of this model is by examining the solution of either one of the pair of equations
\begin{equation*}
\dfrac{x}{\kappa_{d,x}} = f_x(\kappa_{d,y}f_y(x)) := {\mathcal F}_x (x), \,\,%\quad \mbox{and}\quad
\dfrac{y}{\kappa_{d,y}} = f_y(\kappa_{d,x}f_x(y)) := {\mathcal F}_y(y).
\end{equation*}
We choose to deal with the first.  Note that since both $f_x$ and $f_y$ are monotone decreasing functions of their arguments, the composition of the two
\begin{equation}
{\mathcal F}_x(x)   = \dfrac{1+(\kappa_{d,y}f_y(x))^{n_x}}{1 + \Delta_x (\kappa_{d,y}f_y(x))^{n_x}}
= \dfrac{1+\left (\kappa_{d,y} \dfrac{1+x^{n_y}}{1 + \Delta_y x^{n_y}} \right )^{n_x}}{1 + \Delta_x \left (\kappa_{d,y} \dfrac{1+x^{n_y}}{1 + \Delta_y x^{n_y}} \right )^{n_x}}
\label{eq:composition-x}
\end{equation}
is a monotone increasing function of $x$ with
$$
{\mathcal F}_x(0)   =
\dfrac{1+ \kappa_{d,y}    ^{n_x}}{1 + \Delta_x  \kappa_{d,y}    ^{n_x}} := {\mathcal F}_{x,0}
$$
and
$$
\lim_{x \to \infty} {\mathcal F}_x(x)   =
\dfrac{1+ (\kappa_{d,y}  \Delta_y^{-1}    )^{n_x}}{1 + \Delta_x   (\kappa_{d,y}   \Delta_y^{-1}    )^{n_x}} := {\mathcal F}_{x,\infty}
> {\mathcal F}_{x,0}.
%\label{eq:ss-x}
$$
In Figure \ref{fig:Fig3} we have shown graphically the same sequence of steady states as we illustrated in Figure \ref{fig:fig2}
\begin{figure}[tb]
\centering
\includegraphics[width=\figwidth]{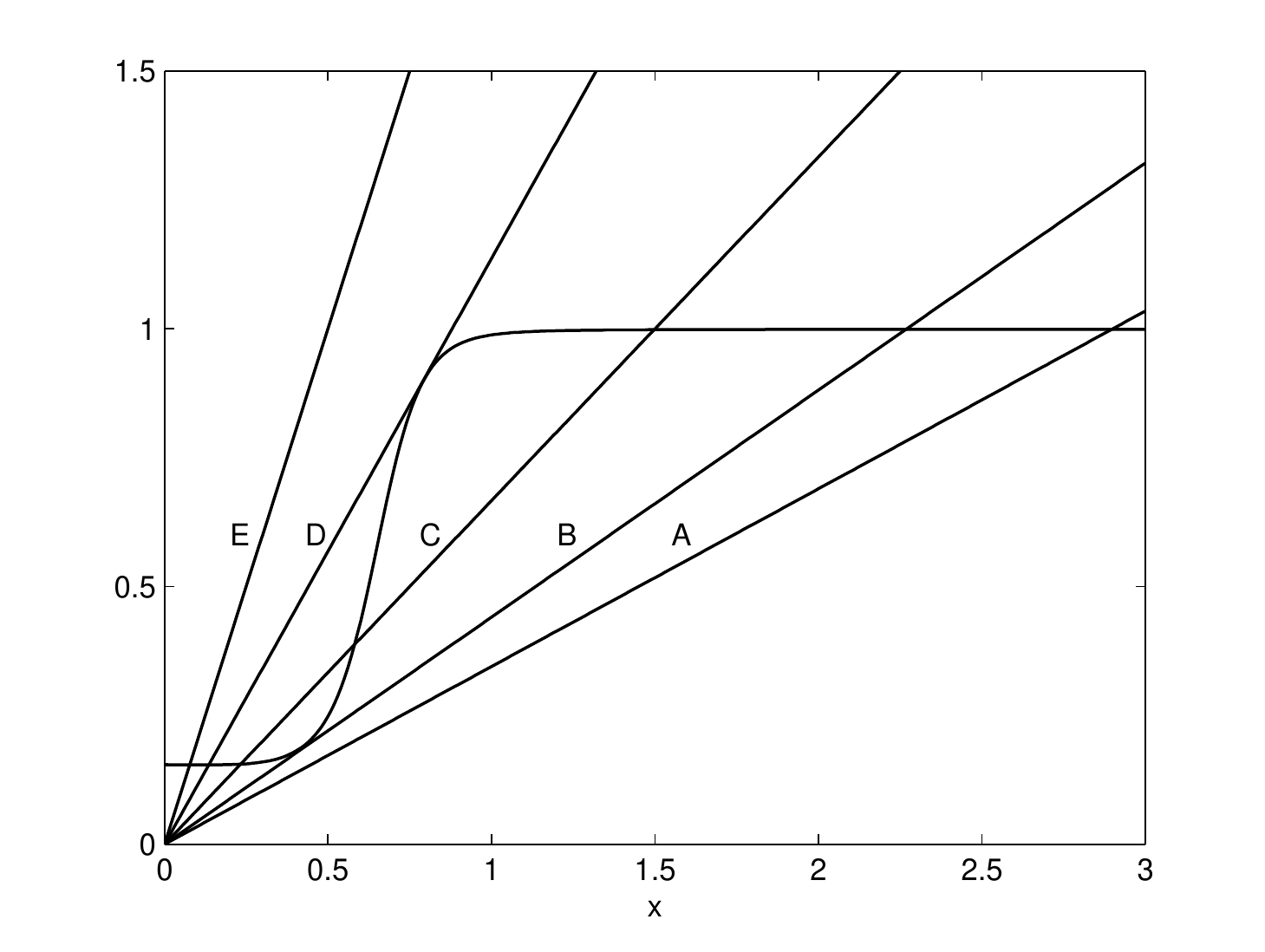}

\caption{A graphical representation of the possible steady state solutions of the equation ${x}/{\kappa_{d,x}} = f_x(\kappa_{d,y}f_y(x)) := {\mathcal F}_x (x)$.  The smooth monotone increasing graph is that of ${\mathcal F}_x (x)$ as given in Equation  \ref{eq:composition-x}, while the straight line is that of ${x}/{\kappa_{d,x}}$ for different values of $\kappa_{d,x}$.  The five straight lines correspond to the five possibilities (A through E) in Figure \ref{fig:fig2}.
} \label{fig:Fig3}
\end{figure}

\subsection{Analytic investigation of the steady states}

\noindent{\bf Single versus multiple steady states.} This model for a bistable genetic switch may
have one [$W_1^*$ ( E of Figure \ref{fig:fig2} or  Figure \ref{fig:Fig3}) or $W_3^*$ (A)], two [$W_1^*,W_2^*=W_3^*$ (D) or
$W_1^*=W_2^*,W_3^*$ (B)], or three [$W_1^*,W_2^*,W_3^*$ (C)] steady
states, with the ordering $0 \preceq  W_1^* \preceq W_2^* \preceq W_3^*$,
indicating that $W_1^*$ corresponds to operon $X$ in the OFF state and operon $Y$ in the ON state while at $W_3^*$ $X$ is ON and $Y$ is OFF.

Analytic conditions for the existence of one or more steady states can be obtained by first noting that we must
have \begin{equation} 
\dfrac{x}{\kappa_{d,x}} = f_x(\kappa_{d,y}f_y(x)) := {\mathcal F}_x (x) \label{eq:ss-equality} 
\end{equation}
satisfied. In Figure \ref{fig:Fig4} we have illustrated Equation \ref{eq:ss-equality} for various values of
parameters.
\begin{figure}[tb]
\centering
\includegraphics[width=\figwidth]{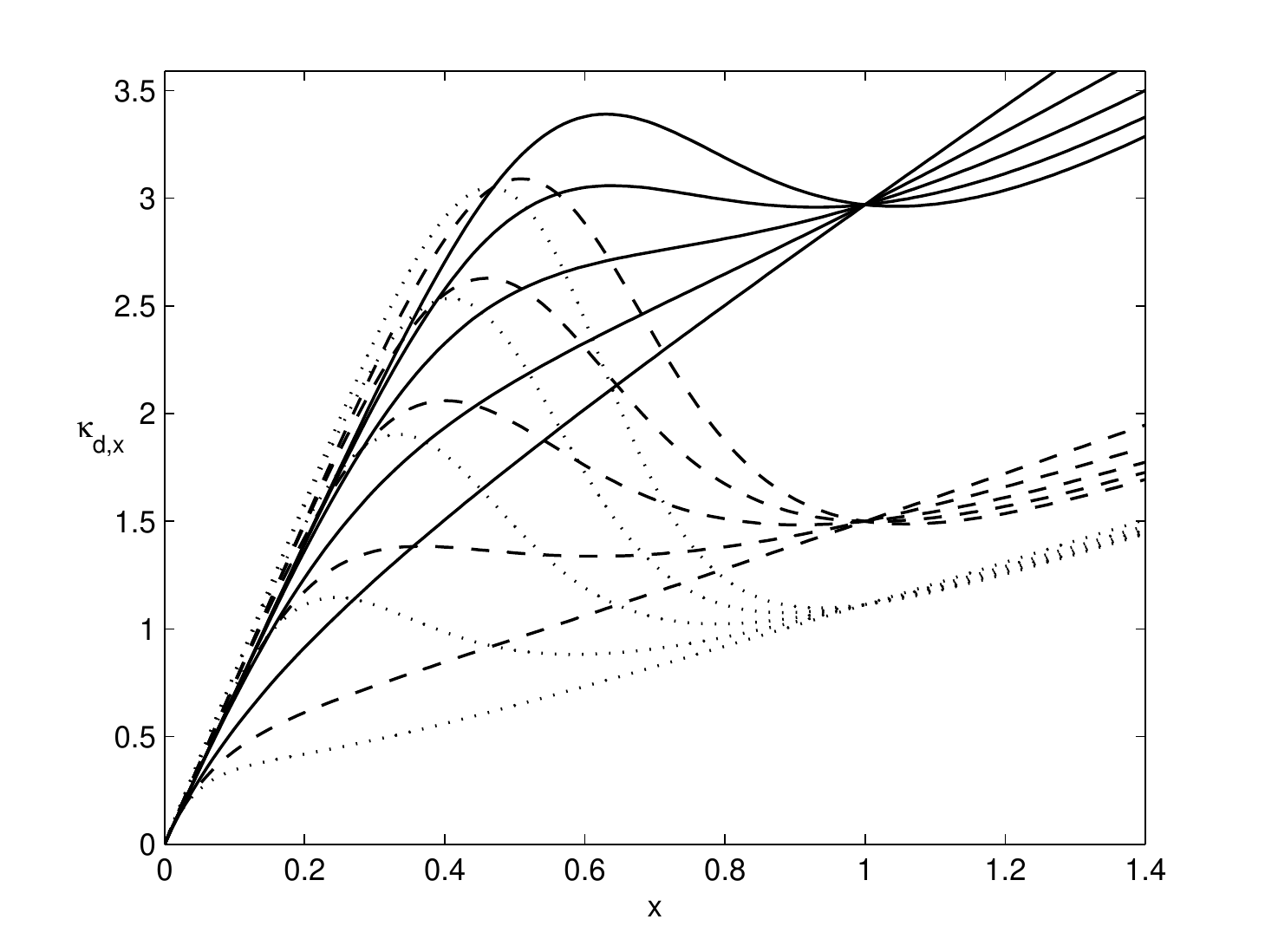}

\caption{The plot of $\kappa_{d,x}$ versus $x$ obtained from Equation  \ref{eq:ss-equality}. The figure was
constructed for the following parameters: $n_x\in \{1,2,3\}$, $n_y\in \{1,2,3,4,5\}$,  $\kappa_y=2$,
$\Delta_x=12$, $\Delta_y=10$. The solid lines correspond to $n_x=1$ and we increase $n_y$ from $1$ (the lowest
line) to $5$ (the top one). The dashed lines correspond to $n_x=2$ and the dotted to $n_x=3$.} \label{fig:Fig4}
\end{figure}

In addition to this criteria, we have a second relation at our disposal at the delineation points
between the existence of two and three steady state.  These points are also determined by a second relation
since $x/\kappa_d$ is tangent to ${\mathcal F}_x (x)$ (see Figure \ref{fig:Fig3} B,D). Thus we must also have
    \begin{equation*}
    \dfrac 1 {\kappa_{d,x} }
    = \dfrac{d {\mathcal F}_x (x) }{dx }.
    \label{eq:ss-slope}
    \end{equation*}
Now the problem is to derive values for $x_\pm$ at which a tangency occurs, as well as to figure out some way to make a parametric plot of a combination of $\kappa_{d,x}, \kappa_{d,y}, \Delta_x, \Delta_y$ for given values of $n_x,n_y$.

Indeed, from Equations \ref{eq:ss1} and \ref{eq:ss2} we have
\begin{equation}
x = \kappa_{d,x}f_x(y) \quad \mbox{and} \quad y = \kappa_{d,y}f_y(x).
\label{eq:kappa(x)}
\end{equation}
Additionally at a tangency between $f_x(y)$ and $f_y(x)$ we must have
$$
x = \kappa_{d,x}f_x(\kappa_{d,y}f_y(x)),
$$
so
$$
1 = \kappa_{d,x} \kappa_{d,y} f'_x f'_y.
$$
However,
$$
\kappa_{d,x} \kappa_{d,y} = \dfrac{xy}{f_x(y) f_y(x)}
$$
so we have an implicit relationship between $x$ and $y$ given by
$$
\dfrac{f_x(y)}{y f'_x(y)} = \dfrac{xf'_y(x)}{f_y(x)}
$$
that, when written explicitly becomes
\begin{equation}
L(y) := - \dfrac{(1+y^{n_x})(1+\Delta_xy^{n_x})}{n_x(\Delta_x -1)y^{n_x}}
=
- \dfrac
{n_y(\Delta_y -1)x^{n_y}}
{(1+x^{n_y})
(1+\Delta_y x^{n_y} ) } := R(x).
\label{eq:L=R}
\end{equation}

Now $L(y)$ has a maximum  at $y_{max} =  \Delta_x^{-1/2n_x}$ and
$$
L(y_{max}) := L_{max} = - \dfrac{(1+\sqrt{\Delta_x})^2}
{n_x(\Delta_x -1)},
$$
while $R(x)$ has a minimum at $x_{min} = \Delta_y^{-1/2n_y}$ given by
$$
R(x_{min}) := R_{min} = - \dfrac{n_y(\Delta_y -1)}{(1+\sqrt{\Delta_y})^2}.
$$
A necessary condition for there to be a solution to Equation \ref{eq:L=R}, and thus a necessary condition for
bistability,  is that $L_{max} \geq R_{min}$ or
\begin{equation*} n_x n_y \geq
\dfrac{(1+\sqrt{\Delta_x})^2(1+\sqrt{\Delta_y})^2}{(\Delta_x -1)(\Delta_y -1)} \geq 1.
\end{equation*} This is
interesting in the sense that if either $n_x$ OR $n_y$ is one but the other is larger than one then the
possibility of bistability behavior still persists, while in the situation of \cite{mty-2011} this is
impossible   (the same observation has been made by \cite{cherry-2000} in a somewhat simpler model).  However, note from Figure \ref{fig:Fig4} that this necessary condition is far from what is sufficient since it would appear from Equation \ref{eq:ss-equality} that a necessary and sufficient condition is more like $n_xn_y \simeq 4$.

Going back to Equation \ref{eq:L=R}, we can write
\begin{equation*}
 \Delta_x y^{2n_x} + [n_x(\Delta_x
-1)R(x) + (\Delta_x + 1)] y^{n_x} + 1 = 0, \label{eq:y(x)}
\end{equation*}
which has two positive solutions $y_{\pm}$ given by
\begin{equation}
y_{\pm} = \sqrt[n_x]{ \dfrac{\Delta_x - 1}{2 \Delta_x}
    \left\{    -
    n_x R(x)  -\dfrac { \Delta_x + 1}{ \Delta_x - 1}
        \pm \sqrt
    {
    [n_x R(x)]^2
    + 2n_x R(x) \dfrac { \Delta_x + 1 }{  \Delta_x - 1}     +1
     }
    \right\}},
    \label{eq:tangency}
\end{equation} provided that
\[
 [n_x R(x)]^2
    + 2n_x R(x) \dfrac { \Delta_x + 1 }{  \Delta_x - 1}     +1\ge 0
\]
and
\[
-n_x R(x)  -\dfrac { \Delta_x + 1}{ \Delta_x - 1}
        -\sqrt
    {
    [n_x R(x)]^2
    + 2n_x R(x) \dfrac { \Delta_x + 1 }{  \Delta_x - 1}     +1}\ge 0.
\]
Substitution of the result into Equations \ref{eq:kappa(x)} gives explicitly
\begin{equation}
\kappa_{d,x}(x) = \dfrac {x}{f_x(y(x))} \quad \mbox{and} \quad \kappa_{d,y}(x) = \dfrac{y(x)}{f_y(x)},
\label{eq:kappa-explicit}
\end{equation}
where $y(x)$ is either $y_{+}$ or $y_{-}$ as given by
\eqref{eq:tangency}.

\begin{figure}[tb]
\centering
\includegraphics[width=\figwidth]{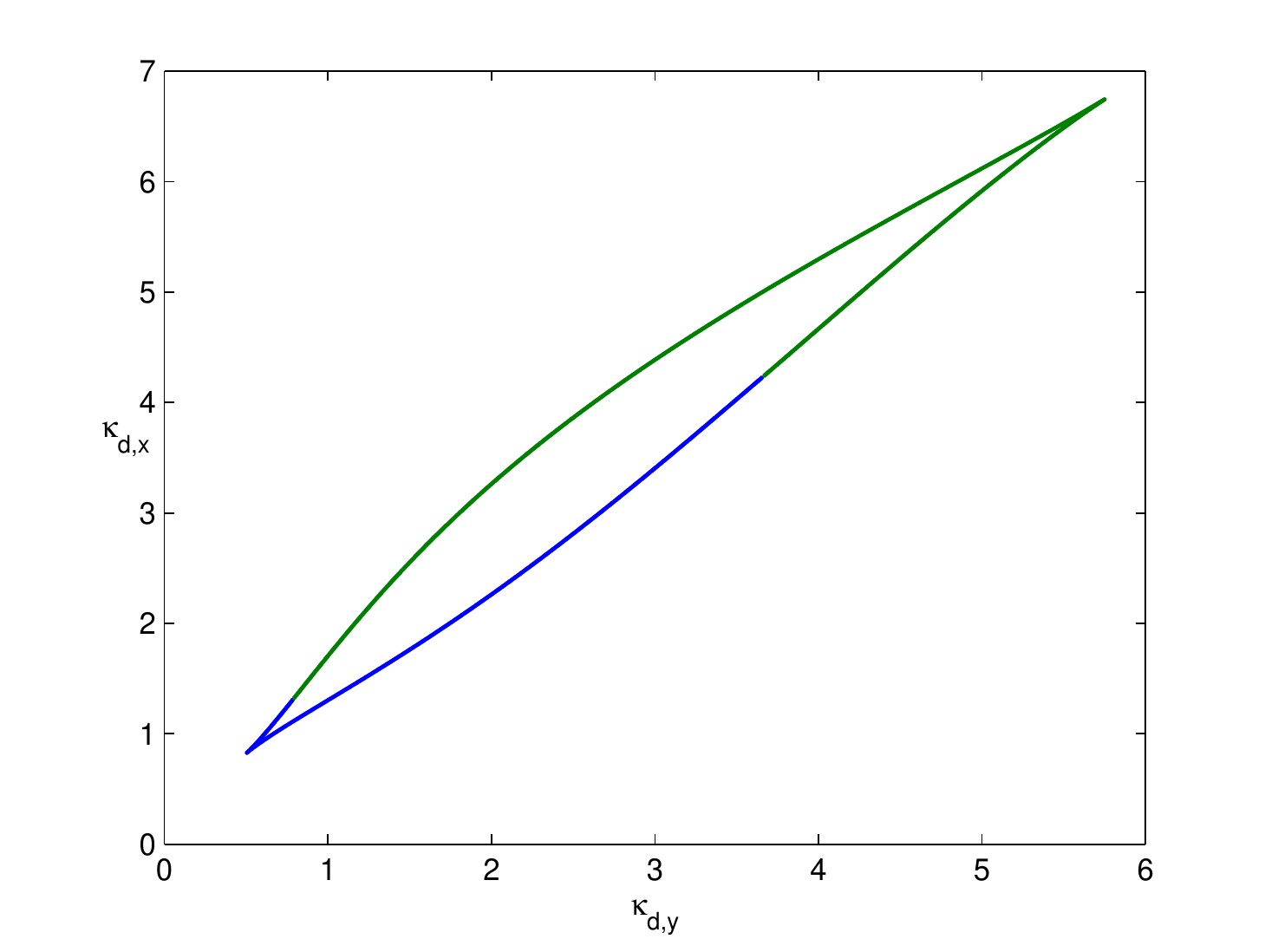}

\caption{The parametric plot of $\kappa_{d,x}$ versus $\kappa_{d,y}$ obtained from Equation
\ref{eq:kappa-explicit} where we used the following parameters: $n_x=2$, $n_y=3$,
 $\Delta_x=12$, $\Delta_y=10$. The blue line is for $y(x)=y_{-}$ and the green for $y(x)=y_{+}$. }
\label{fig:Fig5}
\end{figure}
In Figure \ref{fig:Fig5} we  have  plotted $\kappa_{d,x}(x)$ versus $\kappa_{d,y}(x)$ with
$x$ as the parametric variable.  Inside the region bounded by the blue line (below) and green line (above) we are assured of the existence of bistable behaviour while outside this region there will be only a single globally stable steady state.  Thus, for example, for a constant value of $\kappa_{d,y}$ such that bistability is possible, then increasing $\kappa_{d,x}$ from $0$ there will be a minimal value $\kappa_{d,x-}$ at which bistability is first seen and this will persist as $\kappa_{d,x}$ is increased until a second value $ \kappa_{d,x-} < \kappa_{d,x+}$ is reached where the bistable behaviour once again disappears. In Figure \ref{fig:Fig5-2} we have shown how the change of the parameter $\Delta_y$ influences the shape and position of the region of  parameters $\kappa_{d,y}$ and $\kappa_{d,x}$ where a bistable behaviour is possible. It is clear that an increase in $\Delta_y$ corresponds to a decrease in the leakage, and our results show a clear expansion in the size of the region of bistability as well as a shift in $(\kappa_{d,y}, \kappa_{d,x})$ space.  This is the same observation made in \cite{mty-2011}.
\begin{figure}[tb]
\centering
\includegraphics[width=\figwidth]{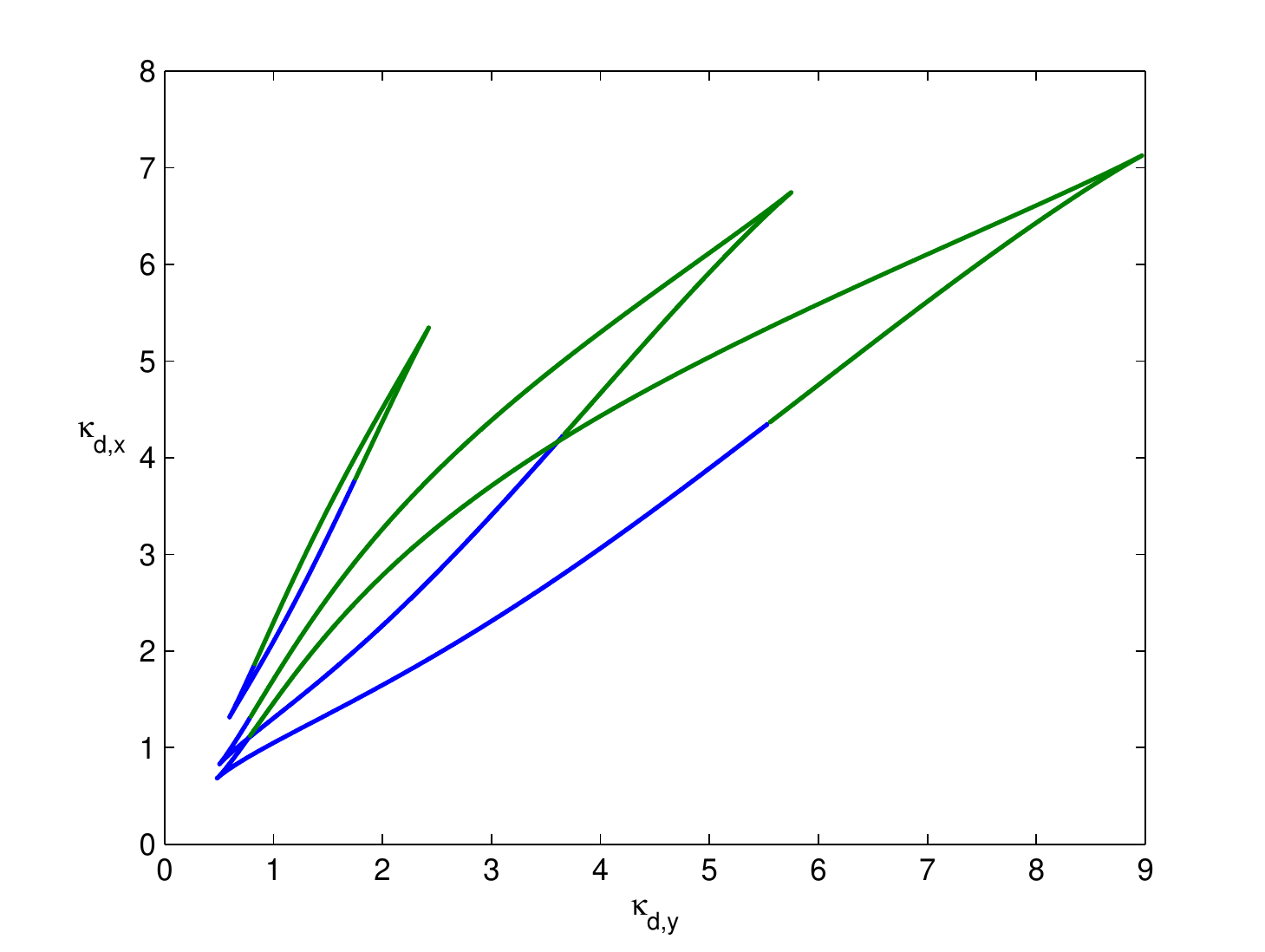}

\caption{ As in Figure~\ref{fig:Fig5} but with varying parameter $\Delta_y\in \{5,10,15\}$, from left to right.  }
\label{fig:Fig5-2}
\end{figure}

\subsection{Local and global stability.}

Whether or not a steady state $W^*$ is locally stable is completely determined by the eigenvalues that solve the equation
    \begin{equation}
    \prod_{i=1}^2 (\lambda + \gamma_{x_i})(\lambda + \gamma_{y_i}) -
    \prod _{i=1}^2 \gamma_{x_i} \gamma_{y_i} \kappa_{d,x} \kappa_{d,y}  f'_{x*} f'_{y*} =0,
    \label{eq:eigenvalue}
    \end{equation}
where $f'_{x*} = f_x'(x^*), f'_{y*} = f_y'(y^*)$.
Equation \ref{eq:eigenvalue} can be rewritten in the form
    \begin{equation}
    \sum _{i=1}^4 a_i \lambda^i  + a_0 = 0
    \label{eq:eigenvalue1}
    \end{equation}
where the $a_i$, $i > 0$ are positive and
$
a_0 = (1- \kappa_{d,x} \kappa_{d,y}  f'_{x*} f'_{y*}) \prod _{i=1}^2 \gamma_{x_i} \gamma_{y_i}.
$
By Descartes's rule of signs, (\ref{eq:eigenvalue1}) has no positive roots for $f'_{x*} f'_{y*} \in [0, (\kappa_{d,x} \kappa_{d,y})  ^{-1})$ or one
positive root otherwise. Denote a locally stable steady state by S, a half or
neutrally stable steady state by HS, and unstable steady state by US.
Then we know that there will be:
\begin{itemize}
    \item A single steady state $W_1^*$ (S), for $ \kappa_{d,x}
        \in [0, \kappa_{d,x-})$
    \item Two steady states $W_1^*$ (S) and
        $W_2^*=W_3^*$ (HS) for $\kappa_{d,x} = \kappa_{d,x-}$
    \item Three steady states $W_1^*$ (S), $W_2^*$
        (US), $W_3^*$ (S) for $\kappa_{d,x} \in
        (\kappa_{d,x-},\kappa_{d,x+})$
    \item Two steady states $W_1^*=W_2^*$ (HS) and $W_3^*$ (S)  for
        $\kappa_{d,x} = \kappa_{d,x+}$
    \item One steady state $W_3^*$ (S) for $\kappa_{d,x+} <
        \kappa_{d,x}$.
\end{itemize}

Global stability results of others complement this classification.
\begin{thm}\citep[Proposition 2.1, Chapter 4]{othmer76,smith95}
For the bistable switch given by Equations
\ref{eq:xmrna1-dimen}-\ref{eq:control-dimen}, define $I_x = [\kappa_{d,x}\Delta_x^{-1},1]$ and $I_y = [\kappa_{d,y}\Delta_y^{-1},1]$.  There is an attracting box
$B  \subset \mathbb{R}_4^+$, where
    \begin{equation*}
    B = \{(x_i,y_i): x_{1,2} \in I_x, y_{1,2} \in I_y   \},
    \end{equation*}
for which the flow $S_t$ is directed inward
 on the surface of $B$. All $W^*
    \in B$ and
    \begin{enumerate}
    \item If there is a single steady state,
        then it is globally stable.
    \item If there are two locally stable steady states,
        then all flows $S_t(W^0)$
        are attracted to one of them.
    \end{enumerate}
\end{thm}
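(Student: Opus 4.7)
The plan is to prove the three assertions (invariance/attraction of $B$, containment of all $W^*$ in $B$, and the two global stability statements) separately, with the global stability part leveraging the theory of strongly monotone semiflows. For invariance and the inward-pointing vector field on $\partial B$, I would check each of the eight faces directly using Equations \ref{eq:xmrna1-dimen}--\ref{eq:yintermed2-dimen}. On the face $\{x_1=\kappa_{d,x}\Delta_x^{-1}\}$, Equation \ref{eq:xmrna1-dimen} gives $\dot x_1=\gamma_{x_1}(\kappa_{d,x}f_x(y_2)-\kappa_{d,x}\Delta_x^{-1})\ge 0$ since $f_x\ge\Delta_x^{-1}$; on the upper face the bound $f_x\le 1$ gives $\dot x_1\le 0$. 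The $x_2$-faces reduce to the one-dimensional observation $\dot x_2=\gamma_{x_2}(x_1-x_2)$ combined with $x_1\in I_x$ whenever we are on $\partial B$, and the two pairs of $y$-faces are symmetric.

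For attraction from outside $B$ and for the inclusion $W^*\in B$, elementary comparison arguments suffice. Since $\kappa_{d,x}\Delta_x^{-1}\le\kappa_{d,x}f_x(y_2)\le\kappa_{d,x}$ regardless of $y_2$, integrating Equation \ref{eq:xmrna1-dimen} yields $\liminf_{t\to\infty}x_1(t)\ge\kappa_{d,x}\Delta_x^{-1}$ and $\limsup_{t\to\infty}x_1(t)\le\kappa_{d,x}$; then the linear ODE \ref{eq:xintermed2-dimen} drives $x_2(t)$ into $I_x$, and analogously for $y_1,y_2$. For any steady state, $x^*=\kappa_{d,x}f_x(y^*)\in I_x$ and $y^*=\kappa_{d,y}f_y(x^*)\in I_y$ automatically, so $W^*\in B$.

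The substantive content is (1) and (2), and rests on recasting the semiflow on $B$ as a strongly monotone dynamical system. Because $f_x$ and $f_y$ are strictly decreasing, the Jacobian of the right-hand side at any interior point of $B$ has a sign pattern with two negative off-diagonal cross-links (from $y_2$ to $\dot x_1$ and from $x_2$ to $\dot y_1$) and the remaining intra-operon cross-links nonnegative. Reversing the sign of $(y_1,y_2)$, i.e.\ working in the orthant cone $K=\mathbb{R}_+^2\times(-\mathbb{R}_+^2)$, converts the system into an irreducible cooperative system on the transformed box. By the Kamke--M\"uller criterion the semiflow is strongly monotone on $\mathrm{int}(B)$, and this is exactly the structural hypothesis of the results of \cite{othmer76,smith95} cited as Proposition~2.1, Chapter~4. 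The appeal to those results then gives (1) directly when the equilibrium is unique, and for (2) one orders the three fixed points $W_1^*<_K W_2^*<_K W_3^*$, observes from Equation \ref{eq:eigenvalue1} and Descartes' rule that $W_2^*$ has a one-dimensional unstable manifold and is therefore hyperbolic with a codimension-one stable manifold serving as the separatrix between the two basins.

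The main obstacle I anticipate is upgrading the conclusion from \emph{almost every} to \emph{every} bounded trajectory converging to an equilibrium. Hirsch's generic convergence theorem only handles a prevalent set of initial data; to exclude non-trivial $\omega$-limit sets such as periodic orbits in our four-dimensional setting one needs the additional structure of a monotone cyclic feedback system with positive net loop sign (here, an even number of negative links), and the Mallet-Paret--Smith theorem is precisely what closes this gap. The remaining non-routine points are identifying the correct cone $K$ and checking hyperbolicity of $W_2^*$ carefully enough that the separatrix picture is rigorous; once these are in place the global stability statements follow from the quoted monotone-systems results rather than from any further direct calculation.
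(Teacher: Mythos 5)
Your proposal is correct and takes essentially the same route as the paper, which gives no independent proof but simply cites Othmer (1976) and Smith (1995, Proposition 2.1, Chapter 4) --- results whose proofs rest on exactly the attracting-box estimates and the cooperative, irreducible, strongly monotone structure (after reversing the sign of $(y_1,y_2)$) that you reconstruct, including the extra argument needed to upgrade generic convergence to convergence of every orbit for this cyclic positive-feedback loop. One minor remark: your face computations implicitly take the upper endpoints of $I_x$ and $I_y$ to be $\kappa_{d,x}$ and $\kappa_{d,y}$ rather than the stated value $1$, which is indeed the correct reading of the box (otherwise the claim $W^*\in B$ can fail for $\kappa_{d,x}>1$).
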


\section{Fast and slow variables}\label{ssec:fast-slow}

Identification of fast and slow
variables in systems can often be used to achieve simplifications that allow quantitative examination of the relevant dynamics, and particularly to examine the approach to a steady state and the nature of that steady state.
A fast variable is one that relaxes much more rapidly to
an equilibrium than does a slow variable \citep{haken83}.  In chemical systems this separation is often
a consequence of differences in degradation rates, and the fastest variable  is the one with the largest degradation
rate. { In recent years, with the advent of synthetic biology, investigators have engineered a variety of gene regulatory circuits, including bistable switches of the type considered here, see \cite{hasty2001,huang2012}, in which they were able to experimentally control the speed with which particular variables approached a quasi-equilibrium state.  Thus this experimental technique offers an experimental way to actually achieve the simplification of causing particular variables to become fast variables.
We will use this technique analytically  in examining the effects of noise,  which has the added advantage of allowing us to derive analytic insights from the simplified model that seem to be impossible in the full model.}

If it is the case that there is a single dominant slow variable  in the system (\ref{eq:xmrna1-dimen})-(\ref{eq:yintermed2-dimen}) relative to all of the other three (and here we assume without loss of generality that it is in the $X$ gene) then the four variable system describing the full switch reduces to a single equation
\begin{equation}
\dfrac{dx}{dt} = \gamma[\kappa_{d,x} {\mathcal F} (x) -x],
\label{eq:one-slow}
\end{equation}
and $\gamma$ is the dominant (smallest) degradation rate.  (Here, and subsequently, to simplify  the notation we will drop the subscript $x$ whenever there will not be any confusion  when  treating the situation with a single dominant slow variable.)

\section{Transcriptional and translational bursting}\label{sec:dynamics-single}

It has been quite clearly shown \citep{cai,chubb,golding,raj,sigal,yu} that in a number of experimental
situations some organisms transcribe mRNA discontinuously and as a consequence there is a discontinuous
production of the corresponding effector
 proteins (i.e. protein is produced in bursts).  Experimentally,  the
{\it amplitude} of protein production through bursting translation
of mRNA is exponentially distributed at the single cell level with
density
    \begin{equation}
    h(y) = \dfrac 1 {\bar b} e^{-y/{\bar b}},
    \label{eq:bursting-den}
    \end{equation}
where $\bar b$ is the average burst size, and the {\it frequency}
of bursting $\varphi$ is dependent on the level of the effector.
Writing Equation \ref{eq:bursting-den} in terms of our
dimensionless variables we have
    \begin{equation}
    h(x) = \dfrac 1 {b} e^{-x/{b}}.
    \label{eq:bursting-den-dimen}
    \end{equation}

When bursting is present, the  analog of the
deterministic single slow variable dynamics
discussed  above   is
\begin{equation}
\dfrac{dx}{dt} = -\gamma x + \Xi (h,\varphi(x) ),
\quad \mbox{with} \quad \varphi(x) = \gamma \varphi_{m}
    \mathcal{F}(x),
\label{eq:case1-burst}
\end{equation}
where $\Xi (h,\varphi)$ denotes a jump  Markov process, occurring
at a rate $\varphi$, whose amplitude is distributed with density
$h$ as given in (\ref{eq:bursting-den-dimen}).
Set $\kappa_{b} = \varphi_{m}$, so  $\mathcal{F}$ has the same form as \eqref{eq:composition-x} but with $\kappa_{d,y}$ replaced by $\kappa_{b,y}$.
When we have bursting dynamics described by the stochastic differential equation \eqref{eq:case1-burst}, it has
been shown  \citep{mackeytyran08} that the evolution of the density $u(t,x)$ is governed by the
integro-differential equation
    \begin{equation}
    \begin{split}
    \dfrac{\partial u(t,x)}{\partial t} -\gamma \dfrac{\partial (xu(t,x))}{\partial
    x}& = -\gamma \kappa_{b,x} \mathcal{F}(x)  u(t,x)\\
    & \quad + \gamma \kappa_{b,x} \int_0^x  \mathcal{F}(  z) u(t,z) h(x-z)dz.
    \label{eq:case1-operator-eqn}
    \end{split}
    \end{equation}

In a steady state the (stationary) solution $u_*(x)$ of (\ref{eq:case1-operator-eqn}) is found by solving
\[
    - \dfrac{d (xu_*(x))}{d
    x} = -\kappa_{b,x} \mathcal{F}(  x)u_*(x) + \kappa_{b,x} \int_0^x  \mathcal{F}(  z) u_*(z) h_x(x-z)dz.
    \label{eq:case1-ss-operator-eqn}
    \]
If $u_*(x)$ is unique, then the solution $u(t,x)$ of Equation \ref{eq:case1-operator-eqn} is said to be
asymptotically stable \citep{almcmbk94} in that
 $$
    \lim_{t\to \infty} \int_0^\infty |u(t,x) -u_*(x)|dx = 0
$$
for all initial densities $u(0,x)$.  Somewhat surprisingly, it is possible to actually obtain a closed form solution for $u_*(x)$ as given in the following
\begin{thm}\cite[Theorem 7]{mackeytyran08}. The unique stationary density of Equation
\ref{eq:case1-operator-eqn}, with $\mathcal{F}$ given by Equation
\ref{eq:composition-x} and $h$ given by
(\ref{eq:bursting-den-dimen}), is
    \begin{equation}
    u_*(x) = \dfrac{\mathcal{C}}{x} e^{-x/b} \exp \left [ \kappa_{b,x}
    \int^x \frac{\mathcal{F}(  z)}{z}dz \right
    ],
    \label{eq:ss-soln}
    \end{equation}
$\mathcal{C}$ is a normalization constant such that
$\int_0^\infty u_*(x)dx = 1$,  and  $u(t,x)$ is asymptotically
stable.
\end{thm}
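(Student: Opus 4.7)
The plan is to split the proof into two essentially independent parts: (i) derive the closed-form expression for $u_*$ by reducing the stationary integro-differential equation to a linear first-order ODE, and (ii) deduce asymptotic stability from a general theorem on stochastic semigroups generated by piecewise-deterministic Markov processes of bursting type.

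For part (i), I would exploit the fact that $h(y) = b^{-1}e^{-y/b}$ is the Green's function for the operator $(d/dx + 1/b)$, which turns the convolution into something local. Concretely, let
\begin{equation*}
J(x) = \int_0^x \mathcal{F}(z)u_*(z)\, e^{-(x-z)/b}\, dz,
\end{equation*}
so that $J'(x) = \mathcal{F}(x)u_*(x) - J(x)/b$ and the stationary equation reads
\begin{equation*}
-(xu_*)'(x) = -\kappa_{b,x}\mathcal{F}(x)u_*(x) + \frac{\kappa_{b,x}}{b} J(x).
\end{equation*}
Differentiating once in $x$ and substituting $J(x)/b$ from the line above eliminates $J$ and yields
\begin{equation*}
(xu_*)''(x) + \tfrac{1}{b}(xu_*)'(x) - \kappa_{b,x}\bigl(\mathcal{F}(x)u_*(x)\bigr)' = 0.
\end{equation*}
Integrating once (with vanishing boundary term at infinity, which is justified by integrability of $u_*$) and setting $v(x) = xu_*(x)$ gives the separable linear ODE
\begin{equation*}
v'(x) = \left[-\tfrac{1}{b} + \tfrac{\kappa_{b,x}\mathcal{F}(x)}{x}\right] v(x),
\end{equation*}
whose solution is $v(x) = \mathcal{C}\exp\!\bigl[-x/b + \kappa_{b,x}\int^x \mathcal{F}(z)/z\, dz\bigr]$, matching \eqref{eq:ss-soln}. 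Integrability of $u_*$ on $(0,\infty)$ then follows because $\mathcal{F}$ is bounded (so the exponent is dominated by $-x/b$ at infinity) and because $\mathcal{F}(z)/z$ has an integrable singularity at the origin that is cancelled by the $1/x$ prefactor modulo an $x^{\kappa_{b,x}\mathcal{F}(0)-1}$ factor with $\mathcal{F}(0) = \mathcal{F}_{x,0} > 0$; a short case analysis finishes integrability, and the normalization constant $\mathcal{C}$ is then uniquely fixed.

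For part (ii), asymptotic stability, I would invoke the general machinery for stochastic semigroups associated with jump-drift processes of the form \eqref{eq:case1-burst}. The semigroup generated by \eqref{eq:case1-operator-eqn} is a Markov (stochastic) semigroup on $L^1(0,\infty)$, and the standard approach is to verify that (a) $u_*$ is an invariant density, which is precisely what part (i) establishes, and (b) the semigroup is partially integral and irreducible with respect to Lebesgue measure on $(0,\infty)$, so that Pichor-Rudnicki type theorems (or the Foguel alternative combined with the existence of $u_*$) give convergence in $L^1$ of $u(t,\cdot)$ to $u_*$ for every initial density. The jump kernel is the exponential density $h$, whose support is all of $(0,\infty)$, and the deterministic drift $\dot x = -\gamma x$ alone sweeps any initial mass toward zero, so any open set is reachable from any initial point in finite time; this gives irreducibility. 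Partial integrality follows because after a single jump the distribution already has a density component coming from the convolution against $h$.

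I expect the main obstacle to be the rigorous verification of the asymptotic stability claim, not the derivation of the formula. The ODE reduction is mechanical once one recognizes the Green-function trick, but stability requires pinning down a precise statement (e.g., a theorem of Lasota-Mackey or Pichor-Rudnicki) whose hypotheses apply to this specific PDMP with state-dependent jump rate $\gamma\kappa_{b,x}\mathcal{F}(x)$ and unbounded jump amplitudes. The technical subtlety is controlling $\mathcal{F}(x)$ at $x = 0$ (it is bounded below by $\mathcal{F}_{x,0} > 0$, so the jump rate does not degenerate) and confirming that the invariant density found in part (i) is the only one in $L^1$, which then forces asymptotic stability via the general theorem cited from \cite{mackeytyran08}.
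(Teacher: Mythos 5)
You should note at the outset that the paper does not prove this theorem at all: it is imported verbatim from \cite[Theorem 7]{mackeytyran08} and used as a black box, so the only meaningful comparison is with that cited source (and with the closely analogous proof the paper \emph{does} supply for the two-slow-variable case in Section \ref{sec:2-dom-burst}). Measured against that, your sketch follows essentially the same route: the exponential jump kernel localizes the convolution, the stationary equation collapses to a first-order linear ODE for $xu_*(x)$ whose solution is \eqref{eq:ss-soln} with integrability guaranteed by $\mathcal{F}(0)>0$ and the boundedness of $\mathcal{F}$, and asymptotic stability comes from the theory of partially integral stochastic semigroups for piecewise deterministic Markov processes, where the Pich\'or--Rudnicki ``asymptotic stability or sweeping'' dichotomy is resolved by the existence of the integrable invariant density — precisely the machinery the present paper deploys explicitly (lower-bound kernel, lower semicontinuity, exclusion of sweeping) in its proof of the planar Theorem 3.

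One step in part (i) is stated too loosely. After integrating $(xu_*)''+\tfrac{1}{b}(xu_*)'-\kappa_{b,x}(\mathcal{F}u_*)'=0$ you discard the constant of integration via a ``vanishing boundary term at infinity, justified by integrability of $u_*$'', but integrability of $u_*$ gives no control over $(xu_*)'$ at infinity. The constant is more easily pinned down at the origin: subtracting the original stationary equation shows that the constant equals $\tfrac{1}{b}\bigl(xu_*(x)-\kappa_{b,x}\int_0^x\mathcal{F}(z)u_*(z)e^{-(x-z)/b}\,dz\bigr)$ for every $x>0$; the integral term tends to $0$ as $x\to0^+$ because $\mathcal{F}$ is bounded and $u_*$ is integrable near $0$, and $\liminf_{x\to0^+}xu_*(x)=0$ for the same integrability reason, so the constant is zero. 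Alternatively, one can simply verify by direct substitution that \eqref{eq:ss-soln} satisfies the stationary equation and let the uniqueness assertion of the semigroup theorem carry the rest. With that repair, and with the stability step anchored to a precise statement (as in \cite{mackeytyran08}, or the \cite{rudnickipichortyran02}/\cite{pichorrudnicki00} results quoted in Section \ref{sec:2-dom-burst}), your argument matches the cited proof in structure and substance.
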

Note that $u_*$ can be written as
\begin{equation}
u_*(x)=\mathcal{C}\exp\int^x\left(\dfrac{\kappa_{b,x} \mathcal{F}(z)}{z}-\frac{1}{b}-\frac{1}{z}\right)dz.
\label{eq:new-ss-soln}
\end{equation}
Thus from \eqref{eq:new-ss-soln}  we can write
\begin{equation}
u_*'(x) = u_*(x) \left ( \dfrac{\kappa_{b,x} \mathcal{F}(x)}{x}-\frac{1}{b}-\frac{1}{x}  \right ),
\label{eq:ss-derivative}
\end{equation}
so for $x > 0$ we  have $u_*'(x) = 0$ if and only if
\begin{equation}
\dfrac {1}{\kappa_{b,x}} \left (  \dfrac {x}{b} + 1 \right ) = \mathcal{F}(x).
\label{eq:cond 1}
\end{equation}
An easy graphical argument shows there may be zero to three positive roots of Equation \ref{eq:cond 1}, and if there are three roots we denote them by $\bar x_1 < \bar x_2 < \bar x_3$.  The  graphical arguments  in conjunction with \eqref{eq:ss-derivative} show that two general cases must be distinguished, exactly as was found in \cite{mty-2011}.   (In what follows, $ \kappa_{b,x}$, $\kappa_{b,x-}$, and $\kappa_{b,x+}$ play exactly the same role as do $ \kappa_{d,x}$, $\kappa_{d,x-}$, and $\kappa_{d,x+}$ in the discussion around Figure \ref{fig:Fig5}.)

\noindent {\bf Case 1. $0<  \kappa_{b,x} < \mathcal{F}_{0}^{-1}$.}  In this case, $u_*(0)=\infty$. If $\kappa_{b,x} < \kappa_{b,x-}$, there are no positive solutions, and $u_* $ will be  a monotone
decreasing function of $x$. If $\kappa_{b,x} > \kappa_{b,x-}$, there are two positive solutions ($\tilde x_2$ and
$\tilde x_3$),  and a maximum in $u_* $ at $\tilde x_3$ with a minimum in $u_* $ at $\tilde x_2$.

\noindent {\bf Case 2. $0 <  \mathcal{F}_{0}^{-1} < \kappa_{b,x} $.}  Now, $u_*(0)=0$ and  either  there are  one, two, or three positive
roots of Equation  \ref{eq:cond 1}. When there are three,  $\tilde x_1, \tilde x_3$ will correspond to the location of maxima in $u_* $ and
$\tilde x_2$ will be the location of the minimum between them.  The  condition for the existence of three roots
is $\kappa_{b,x-} < \kappa_{b,x} < \kappa_{b,x+}$.

Thus we can classify  the stationary density $u_*$ for a bistable switch as:
\begin{enumerate}
\item {\bf Unimodal type 1}: $u_*(0)=\infty$ and $u_*$ is monotone decreasing for
    $0 < \kappa_{b,x} < \kappa_{b,x-}$ and $0 < \kappa_{b,x} <  \mathcal{F}_{0}^{-1}$
\item {\bf Unimodal type 2}: $u_*(0)=0$ and $u_*$ has a single
    maximum at
\begin{enumerate}
    \item $\tilde x_1> 0$ for $  \mathcal{F}_{0}^{-1} < \kappa_{b,x} <
        \kappa_{b,x-} $ or
    \item  $\tilde x_3> 0$ for $\kappa_{b,x+} <
        \kappa_{b,x} $ and $  \mathcal{F}_{0}^{-1} <  \kappa_{b,x}$
    \end{enumerate}
\item {\bf Bimodal type 1}:  $u_*(0)=\infty$ and $u_*$ has
    a single maximum at $\tilde x_3 > 0$ for $\kappa_{b,x-} <
    \kappa_{b,x} <  \mathcal{F}_{0}^{-1}$
\item {\bf Bimodal type 2}:  $u_*(0)=0$ and $u_*$ has two maxima at $\tilde
    x_1,\tilde x_3$,  $0 < \tilde x_1 < \tilde x_3$ for
    $\kappa_{b,x-} < \kappa_{b,x} < \kappa_{b,x+}$ and $  \mathcal{F}_{0}^{-1}  <\kappa_{b,x}$
\end{enumerate}
Note in particular from \eqref{eq:cond 1} that a decrease in the leakage (equivalent to an increase in $\mathcal{F}_{0}^{-1}$) facilitates a transition between unimodal and bimodal stationary distributions and that this is counterbalanced by a increases in the bursting parameters $\kappa_b$ and $b$.  Precisely the same conclusion was obtained by \cite{huang2015} and \cite{ochab2015} on analytic and numerical grounds.

The exact determination of these three roots is difficult in general because of the complexity of  $\mathcal{F}$,  but we can derive implicit criteria for when there are exactly two roots ($\bar x_1$ and $\bar x_3$) by determining when the graph of the left hand side of \eqref{eq:cond 1} is tangent to $\mathcal{F}$.  Using this tangency condition, differentiation of \eqref{eq:cond 1} yields
\begin{equation}
\dfrac {1}{\kappa_{b,x} b} = \mathcal{F}'(x).
\label{eq:cond 2}
\end{equation}

Although Equations \ref{eq:cond 1} and \ref{eq:cond 2} offer conceptually simple conditions for delineating when
there are exactly two roots (and thus to find boundaries between monostable and bistable stationary densities
$u_*$), a moments reflection after looking at \eqref{eq:composition-x} for $\mathcal{F}$ reveals that it
is algebraically quite difficult to obtain quantitative conditions in general.  However, \eqref{eq:cond 1} and
\eqref{eq:cond 2} are easily used in determining numerically boundaries between monostable and bistable
stationary densities.

\subsection{Monomeric repression of one of the genes  with bursting $(n_x = 1)$}
\label{ssec:momomeric}

Evaluation of the integral appearing in Equation \ref{eq:ss-soln} can  be carried out for all (positive) integer
values of $(n_x, n_y)$ in theory, but the calculations become algebraically complicated.  However, if we
consider the situation when a single molecule of the protein from the $y$ gene is capable of repressing the $x$
gene, so $n_x = 1$, then the results become more tractable and allow us to examine the role of different
parameters in determining the nature of $u_*$.

Thus, for $n_x = 1$,   $\mathcal{F}$ takes the  simpler form
\[
\mathcal{F} (x) = \dfrac{
(1+\kappa_{b,y})
+
(\Delta_y + \kappa_{b,y})
 x^{n_y}
}
{
\Lambda + \Gamma x^{n_y}
},
\label{eq:F-monomeric}
\]
where
$$
\Lambda = 1+\Delta_x \kappa_{b,y} > 0, \quad
\Gamma = \Delta_y + \Delta_x \kappa_{b,y} >0.
$$
 Evaluating \eqref{eq:ss-soln}
 we have the explicit representation
\begin{equation}
u_*(x) = {\mathcal{C}} e^{-x/b}
x^{A-1} [\Lambda + \Gamma x^{n_y}]^{\theta}
\label{eq:ss-soln-n=1}
\end{equation}
with
$$
A = \dfrac{\kappa_{b,x}  (1+\kappa_{b,y})}{\Lambda} > 0, \quad
\theta = \dfrac{\kappa_{b,x} \kappa_{b,y} (\Delta _x - 1 ) (\Delta_y - 1) )}{n_y  \Lambda \Gamma} > 0.
$$

\begin{figure}[tb]
\centering \subfigure[]{\includegraphics[width=0.45\textwidth]{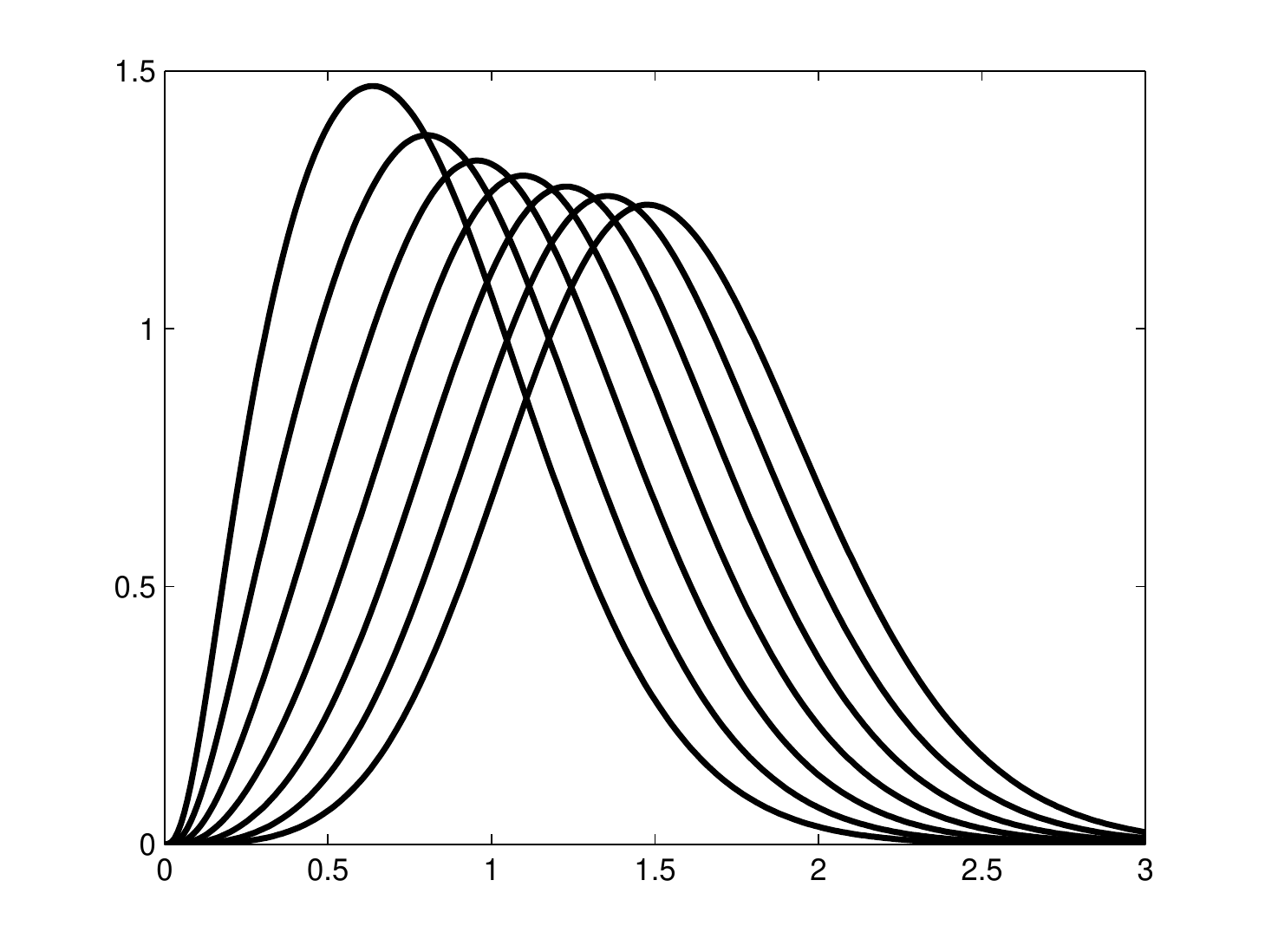}}\qquad
\subfigure[]{\includegraphics[width=0.45\textwidth]{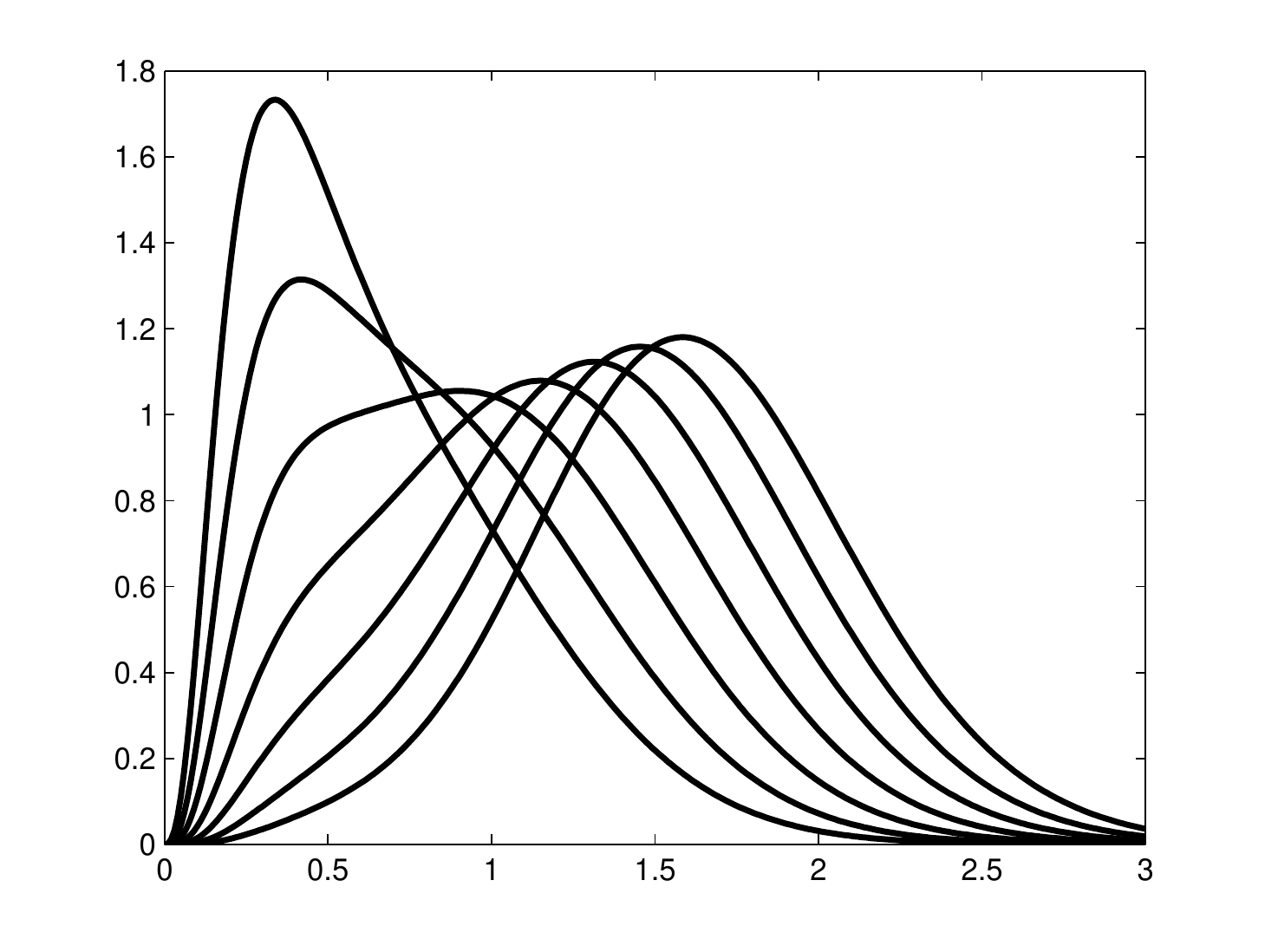}}\\

\subfigure[]{\includegraphics[width=0.45\textwidth]{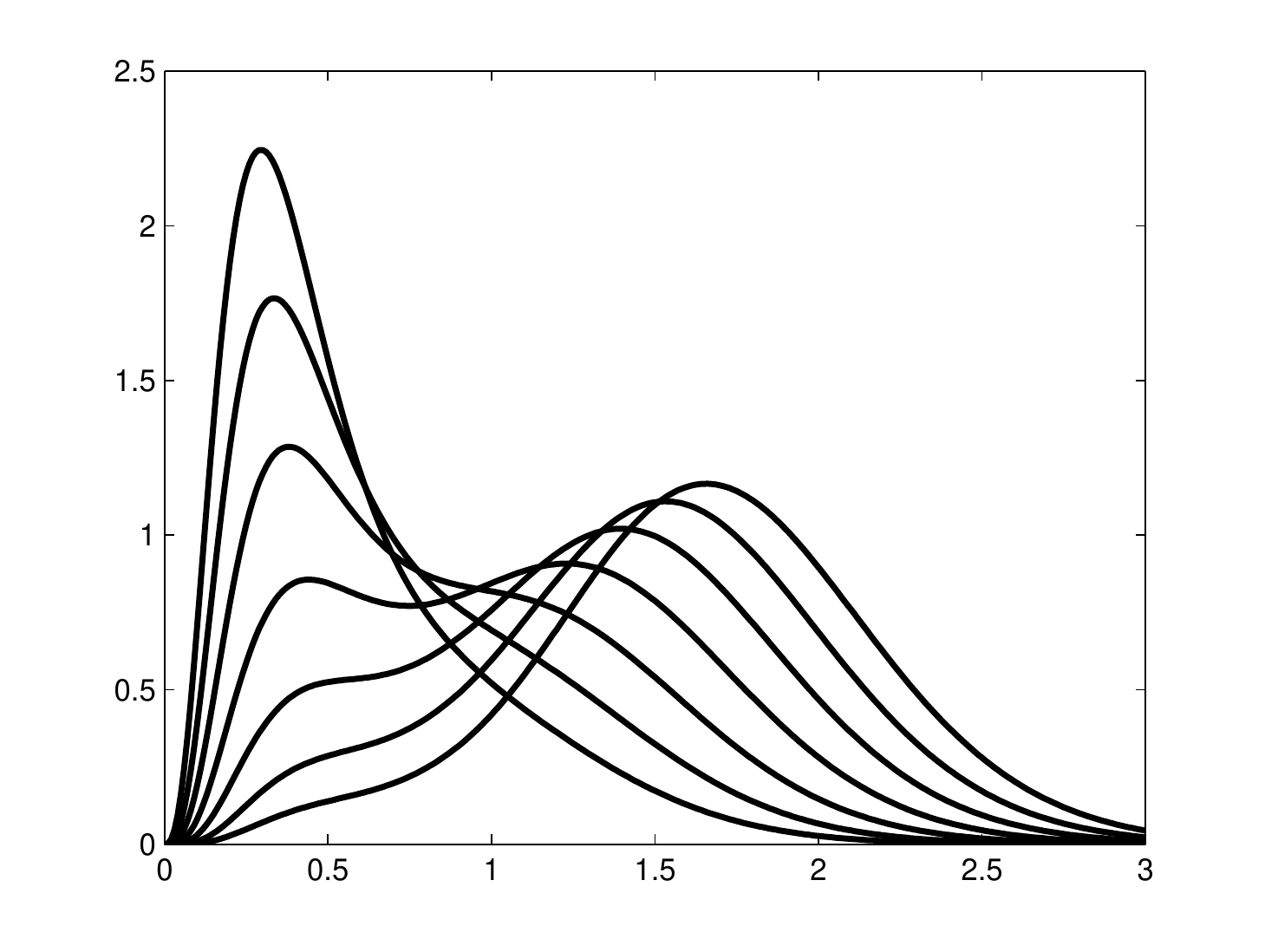}}\qquad
\subfigure[]{\includegraphics[width=0.45\textwidth]{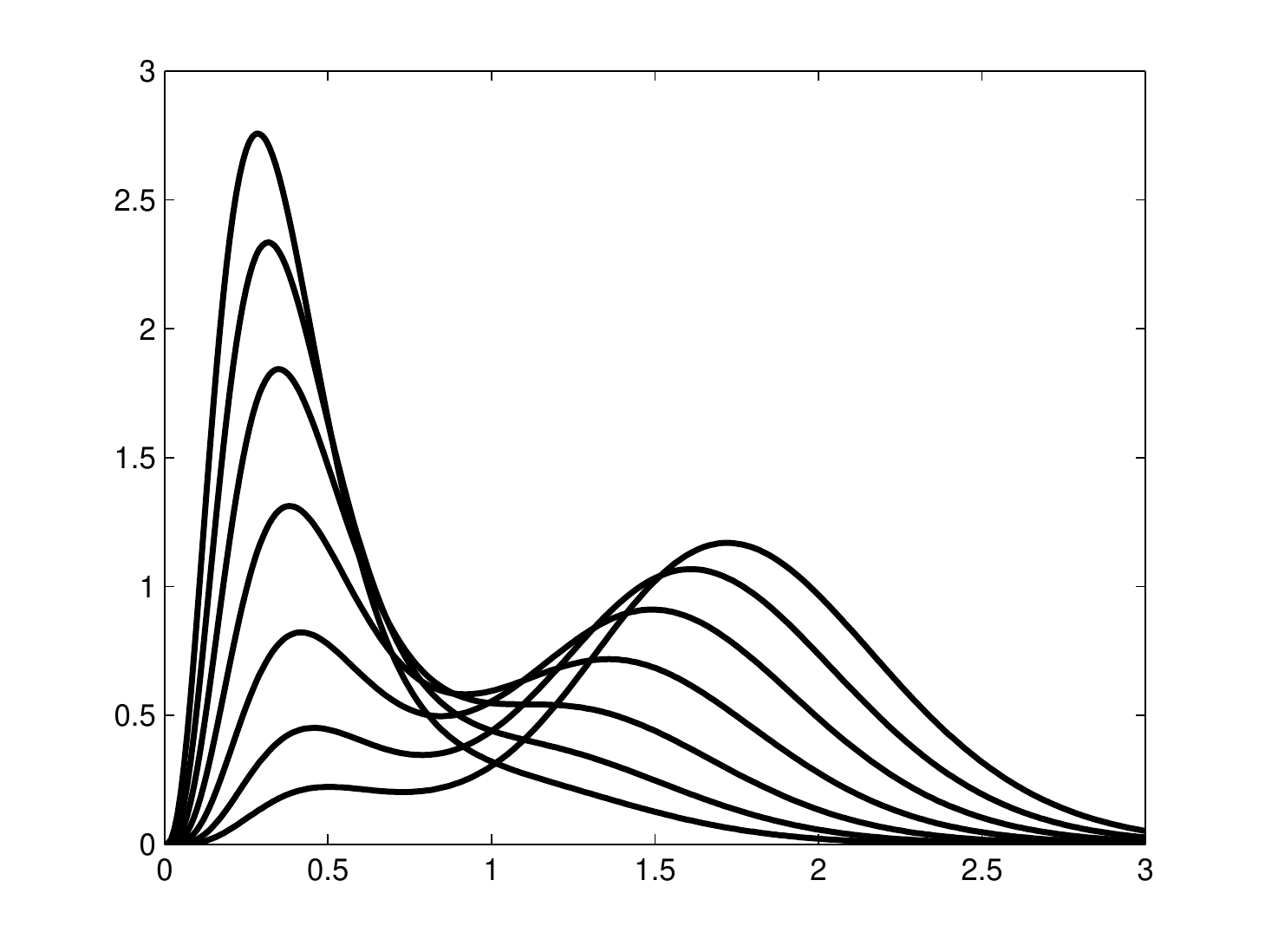}} \caption{In this figure we illustrate stationary
densities given by Equation~\ref{eq:ss-soln-n=1} where the parameter values in each panel are taken to be
$\kappa_{b,y}=1$, $\Delta_x=12$, $\Delta_y=10$, $\kappa_{b,x}\in[25,37]$ changes by 2, where the graph with
highest maximum corresponds to $\kappa_{b,x}=25$ and the maxima are decreasing when  $\kappa_{b,x}$ is
increased. The parameter $n_y$ is taken in an increasing order to be $2,3,4,6$, so that  we start with $n_y=2$
in panel (A) and have $n_y=6$ in panel (D).}
\label{fig:monomeric}
\end{figure}

In Figure \ref{fig:monomeric} we have illustrated the  form of $u_*(x)$ in four different situations.  Figures
\ref{fig:monomeric} A,B show a smooth variation in a Unimodal Type 2 density as $\kappa_{b,x}\in[25,37]$ is
varied by steps of 2 for $n_y = 2$ and $n_y = 3$ respectively.  The behavior is quite different in Figures
\ref{fig:monomeric} C,D however for there, with $n_y = 4$ and $n_y = 6$, the form of  $u_*(x)$ varies from a
Unimodal Type 2 to a Bimodal Type 2 and back again as $\kappa_{b,x}$ is varied.

\subsection{`Bang-bang' repression with bursting}\label{ss:bang-bursting}

We can partially circumvent the algebraic difficulties of the previous sections by considering a
limiting case.  Consider the situation in which $n_y$ becomes large  so $f_y(x)$ approaches the simpler form
\[
f_y(x) \to  \left\{
    \begin{array}{ll}
    1, & 0 \leq x < \theta,  \\
    \Delta_y^{-1}, & \theta \leq x,
     \end{array}
    \right.\label{eq:f-absolute}
\]
where
$$
 \theta \simeq \dfrac{1}{\sqrt[n_y]{\Delta_y}} \sqrt[n_y]{\dfrac{n_y-1}{n_y+1}} \to \dfrac{1}{\sqrt[n_y]{\Delta_y} } \to 1,
$$
so  we have
\begin{equation}
\mathcal{F}(x) \to \left\{
    \begin{array}{ll}
     \mathcal{F}_{0} = \dfrac{1+\kappa_{b,y}^{n_y}}{1 + \Delta_x \kappa_{b,y}^{n_x}} , & 0 \leq x < 1,  \\
     \mathcal{F}_{\infty} = \dfrac{1+(\kappa_{b,y}/\Delta_y)^{n_y}}{1 + \Delta_x (\kappa_{b,y}/\Delta_y)^{n_x}} , & 1  \leq x.
     \end{array}
    \right.\label{eq:absolute}
\end{equation}
The evaluation of \eqref{eq:ss-soln} is simple and yields a stationary density which is (piecewise) that of the gamma distribution:
\[
u_*(x) =  \mathcal{C} e^{-x/b} x^{A(x)-1}
\label{eq:ss-burst-bang}
\]
where
$$
A(x) =  \left\{
    \begin{array}{ll}
     A_0 \equiv {\kappa_{b,x}\mathcal{F}_{0} }  , & 0 \leq x < 1,  \\
     A_\infty \equiv  {\kappa_{b,x}\mathcal{F}_{\infty} }    , & 1  \leq x.
     \end{array}
    \right.\label{eq:ss-burst-bang-A}
$$
Note that $u_*(x)$ is continuous but not differentiable at $x=1$, and  $\mathcal{C}$ is given explicitly by
$$
\mathcal{C} = \dfrac{1}
{b^{A_0 }
[\Gamma ( A_0 )
- \Gamma ( A_0, 1/b )
]
 +
b^{ A_\infty } \Gamma ( A_\infty  )
},
$$
where $\Gamma (\alpha)$ is the gamma function and $\Gamma(\alpha, \beta)$ is the incomplete gamma function.

In this limiting case the stationary density may display one of three general forms as we have classified the densities earlier.  Namely:

\begin{enumerate}
\item If $\kappa_{b,x} < \mathcal{F}_{0}^{-1}$ and $\kappa_{b,x} < \mathcal{F}_{\infty}^{-1}$ then $u_*(x)$ will be of {\bf Unimodal type 1};
\item If $\kappa_{b,x} < \mathcal{F}_{0}^{-1}$ and $\kappa_{b,x} > \mathcal{F}_{\infty}^{-1} $ then $u_*(x)$ will be {\bf Bimodal type 1};
\item If $\kappa_{b,x} > \mathcal{F}_{0}^{-1}$ (which implies  $\kappa_{b,x} > \mathcal{F}_{\infty}^{-1} $) then $u_*(x)$ will be {\bf Bimodal type 2}.
\end{enumerate}

\section{Gaussian distributed noise in the molecular  degradation rate}\label{sec:dynamics-degrad}

For a generic one dimensional stochastic differential equation of
the form
    \begin{equation*}
    dx(t) = \alpha(x) dt + \sigma(x)dw(t),
    \label{eq:generic-sde}
    \end{equation*}
where $w$ is a standard Brownian motion,
the corresponding Fokker Planck equation
    \begin{equation}
    \dfrac
    {\partial u}{\partial t} = - \dfrac {\partial (\alpha u)}{\partial x}
    + \dfrac 12 \dfrac {\partial^2 (\sigma^2u)}{\partial x^2}
    \label{eq:generic-fp}
    \end{equation}
can be written in the form of a conservation equation
    \begin{equation*}
    \dfrac
    {\partial  u}{\partial t} + \dfrac {\partial J}{\partial x} = 0,
    \label{eq:generic-conservation}
    \end{equation*}
where
    \begin{equation*}
    J = \alpha u - \frac 12 \dfrac {\partial (\sigma^2u)}{\partial x}
    \label{eq:current}
    \end{equation*}
is  the probability current.  In a steady state when $\partial_t u
\equiv 0$, the current must satisfy $J = \mbox{constant}$
throughout the domain of the problem.  In the particular case when
$J = 0$ at one of the boundaries (a reflecting boundary) then
$J=0$ for all $x$ in the domain and the steady state solution
$u_*$ of Equation \ref{eq:generic-fp} is easily obtained with a
single quadrature as
    \begin{equation*}
    u_*(x) = \dfrac {\mathcal C}{\sigma^2(x)} \exp
    \left \{
    2 \int^x \dfrac {\alpha (y)}{\sigma^2(y)}dy
    \right \},
    \label{eq:generic-ss}
    \end{equation*}
where $\mathcal C$ is a normalizing constant as before.

In our considerations of the  effects of continuous fluctuations, we examine the  situation in which Gaussian
fluctuations appear in the degradation rate $\gamma_x$ of the generic equation (\ref{eq:one-slow}).
\cite{gillespie2000} has shown that in this situation we need to consider what he calls the chemical Langevin equation, so \eqref{eq:one-slow} takes the form
    \begin{equation*}
    dx = \gamma [\kappa_{d,x} \mathcal{F} (  x)  - x] dt +     \sqrt{\gamma x}
    dw.
    \label{eq:1d-stoch-decay}
    \end{equation*}
(In the situation we consider here, $\alpha (x) =  \gamma [ \kappa_{d,x}  \mathcal{F} (  x) - x]$ and $\sigma(x)
=  \sigma_{\gamma} \sqrt{x}$.) Within the Ito interpretation of stochastic integration,  this equation has a
corresponding Fokker Planck equation for the evolution of the ensemble density $u(t,x)$ given by
%\citep{almcmbk94}
    \begin{equation}
     \dfrac
     {\partial u}{\partial t} = - \dfrac {\partial
    \left [\gamma
    (\kappa_{d,x}  \mathcal{F} (  x) -  x) u
    \right ]}{\partial x}
    + \dfrac {{\gamma} }{2} \dfrac {\partial ^2 ( x u )}{\partial x^2}.
    \label{eq:1d-stoch-decay-fp}
    \end{equation}
Since  concentrations of molecules cannot become negative the
boundary at $x=0$ is reflecting and the stationary solution of
Equation \ref{eq:1d-stoch-decay-fp} is given by
    \begin{equation}
    u_*(x) = \dfrac
{\mathcal{C}}{x}
e^{-2 x  }
    \exp \left [   2 \kappa_{d,x} \int^x \frac{ \mathcal{F} (z)}{z}dz \right
    ]. % \qquad \mbox{with} \quad b_{w,\gamma} = \dfrac{\sigma_\gamma ^2}{2 \gamma}.
    \label{eq:ss-soln-decay}
    \end{equation}

We have also the following result.

\begin{thm}\cite[Theorem
2]{pichorrudnicki00}. The unique stationary density of Equation
\ref{eq:1d-stoch-decay-fp} is given by Equation
\ref{eq:ss-soln-decay}. Further $u(t,x)$ is asymptotically
stable.
\end{thm}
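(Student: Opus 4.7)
The plan is to proceed in three steps. \emph{First}, derive $u_*$ by exploiting the conservation-law form of the Fokker--Planck equation \eqref{eq:1d-stoch-decay-fp}. With $\alpha(x)=\gamma(\kappa_{d,x}\mathcal{F}(x)-x)$ and $\sigma^{2}(x)=\gamma x$, the probability current is $J=\alpha u - \tfrac{1}{2}\partial_{x}(\sigma^{2}u)$. Since $x=0$ is a reflecting boundary and the drift is strongly restoring as $x\to\infty$, in the stationary regime the current must vanish identically on $(0,\infty)$. The zero-current equation reduces to the separable first-order linear ODE
\[
\frac{u_*'(x)}{u_*(x)} \;=\; \frac{2\kappa_{d,x}\mathcal{F}(x)}{x} - 2 - \frac{1}{x},
\]
and a single quadrature reproduces \eqref{eq:ss-soln-decay}.

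\emph{Second}, I would check that $u_*$ is integrable on $(0,\infty)$, so that the normalizing constant $\mathcal{C}$ makes sense. Near $x=0$, $\mathcal{F}(0)=\mathcal{F}_{0}$ is a positive finite constant, so $\int^{x}\mathcal{F}(z)/z\,dz \sim \mathcal{F}_{0}\ln x$ and $u_*(x)\sim x^{2\kappa_{d,x}\mathcal{F}_{0}-1}$, which is integrable at $0$ because $2\kappa_{d,x}\mathcal{F}_{0}>0$. As $x\to\infty$, $\mathcal{F}(x)\to\mathcal{F}_{\infty}<\infty$, so the exponential of the integral grows only polynomially in $x$ and is killed by the $e^{-2x}$ factor. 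Hence $u_*\in L^{1}(0,\infty)$, and it is a bona fide stationary density.

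\emph{Third}, and this is where the real work lies, I must establish uniqueness of the stationary density together with $L^{1}$ asymptotic stability of the full semigroup. I would follow the Markov-semigroup strategy of Pich\'or and Rudnicki. Let $\{P(t)\}$ denote the stochastic semigroup on $L^{1}(0,\infty)$ generated by \eqref{eq:1d-stoch-decay-fp} with the reflecting condition at $0$. Because $\sigma^{2}(x)=\gamma x>0$ on $(0,\infty)$ and the coefficients are smooth, parabolic (or H\"ormander) regularity yields a smooth, strictly positive transition kernel, so $P(t)$ is \emph{partially integral} with full support on $(0,\infty)$. Combined with the existence of the stationary density constructed above, kernel positivity rules out the sweeping alternative in the Foguel-type dichotomy, and one concludes that $\|P(t)u(0,\cdot)-u_*\|_{L^{1}}\to 0$ for every initial density --- which simultaneously yields uniqueness of $u_*$. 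The most delicate technical point is tightness of the semigroup orbits, needed to exclude sweeping; here the linear Lyapunov function $V(x)=x$ suffices, since $\mathcal{L}V(x)=\gamma(\kappa_{d,x}\mathcal{F}(x)-x)\to-\infty$ as $x\to\infty$ (because $\mathcal{F}$ is bounded above by $\mathcal{F}_{\infty}$), which supplies the drift condition required by the Pich\'or--Rudnicki framework. The degenerate boundary at $x=0$ causes no additional difficulty, since the reflecting condition together with the near-zero integrability verified in step two prevents loss of mass there.
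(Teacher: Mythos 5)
Your proposal is correct and follows essentially the same route as the paper: the stationary density \eqref{eq:ss-soln-decay} is obtained exactly as in the text, from the zero-current condition at the reflecting boundary $x=0$ and a single quadrature with $\alpha(x)=\gamma[\kappa_{d,x}\mathcal{F}(x)-x]$, $\sigma^{2}(x)=\gamma x$, and your integrability check at $0$ and $\infty$ is right. For uniqueness and asymptotic stability the paper simply invokes Theorem 2 of Pich\'or and Rudnicki (2000), which is precisely the partially-integral Markov-semigroup argument (positivity of the transition kernel on $(0,\infty)$ plus existence of the stationary density excluding the sweeping alternative) that you sketch, so your extra Lyapunov-function step is harmless but not needed once $u_*$ is in hand.
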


\begin{rem}
Note that the stationary solution for the density $u_*(x)$ given by Equation \ref{eq:ss-soln-decay} in the presence of noise in the protein degradation rate is {\bf identical} to the solution in Equation \ref{eq:ss-soln}, when transcriptional and/or translational noise in present in the system, as long as we make the identification of $\kappa_{b,x}$ with $ 2\kappa_{d,x}  $ and $b$ with $ 1/2 $.  As a consequence, all of the results of the analysis in Section \ref{sec:dynamics-single} are applicable in this section.  { The implication is, of course, that one cannot distinguish between the location of the noise simply based on the nature of the stationary density.}\end{rem}

\section{Two dominant slow genes with bursting}
\label{sec:2-dom-burst}

In this last section we turn our attention to the situation in which we have two slow variables, one in each gene. If there are two slow variables with one in each of the $X$ and $Y$ genes, then we obtain a two dimensional system that is significantly different and more difficult to deal with from what we have encountered so far,
and we wish to examine the existence of the stationary density $u_*(x,y)$ in the presence of bursting production.

For two dominant slow variables in different genes with bursting, the stochastic analogs of the deterministic equations are
\begin{align*}
    \dfrac{dx}{dt} &= -\gamma_{x}x + \Xi (h_1,\varphi_1(y) ) \quad \mbox{with} \quad \varphi_1(y) = \gamma_x \kappa_{b,x} f_x( y), \label{eq:case3-burst-x}\\
    \dfrac{dy}{dt} &= -\gamma_{y}y + \Xi (h_2,\varphi_2(x) ) \quad \mbox{with} \quad \varphi_2(x) = \gamma_y \kappa_{b,y} f_y( x). %\label{eq:case3-burst-y}
    \end{align*}
To be more specific, let $x(t)$ and $y(t)$ denote the amount of protein in a cell at time $t$, $t\ge 0$, produced by gene $X$ and $Y$, respectively.
If only degradation were present, then $(x(t),y(t))$ would satisfy the equation
\begin{equation}\label{e:flow}
x'(t)=-\gamma_x x(t), \quad y'(t)=-\gamma_y y(t), \quad t\ge 0.
\end{equation}
The solution of \eqref{e:flow} starting at time $t_0=0$ from  $(x_0,y_0)\in \mathbb{R}_+^2$ is of the form
\[
\pi_t(x_0,y_0)=(e^{-\gamma_x t}x_0,e^{-\gamma_y t}y_0), \quad t\ge 0.
\]
But, we interrupt the degradation at random
times
\[
0=t_0<t_1<t_2<\ldots
\]
when, independently of everything else,
a random amount
of protein $x$ or $y$ is produced
according to an exponential distribution with mean $b_x$ or $b_y$, respectively, with densities
\begin{equation*}
    h_1(x) = \dfrac 1 {b_x} e^{-x/{b_x}},\quad  h_2(y) = \dfrac 1 {b_y} e^{-y/{b_y}}.
    \label{e:burst}
    \end{equation*}
The rate of production of protein $x$ (protein $y$) depends on the level of protein $y$ (protein $x$) and is $\varphi_1(y)$ ($\varphi_2(x)$). Consequently,  at each $t_k$ if $x(t_k)=x$ and $y(t_k)=y$ then
 one of the genes $X$ or $Y$ can be chosen at random  with probabilities $p_1$ or $p_2$, respectively, given by
\[
p_1(x,y)=\frac{\varphi_1(y)}{\varphi(x,y)},\quad p_2(x,y)=\frac{\varphi_2(x)}{\varphi(x,y)},
\]
and we have
\[
\Pr(t_{k+1}-t_{k}>t|x(t_{k})=x,y(t_{k})=y)=e^{-\int_0^t \varphi(\pi_s(x,y))ds},\quad t>0,
\]
where the function $\varphi$ is of the form
\[
\varphi(x,y)=\varphi_1(y)+\varphi_2(x)=\gamma_x \kappa_{b,x} f_x(y)+ \gamma_y \kappa_{b,y} f_y (x).
\]

The process $Z(t)=(x(t),y(t))$ is a Markov process with values in $E=[0,\infty)^2=\mathbb{R}_+^2$ given by
\[
Z(t)=\left\{
       \begin{array}{ll}
       \pi_{t-t_{k-1}}(Z(t_{k-1})), &  t_{k-1}\le t< t_k, \\
        Z(t_{k}-)+\xi_k, & t=t_k, k=1,2,\ldots
       \end{array}
     \right.
\]
where
\[
Z(t_{k}-)=\pi_{t_k-t_{k-1}}(Z(t_{k-1}))
\]
and
$(\xi_k)_{k\ge 1}$ is a sequence of random variables such that
\begin{equation*}
\Pr(Z(t_{k}-)+\xi_k\in B|Z(t_{k}-)=z)= \mathcal{P}(z,B)
\end{equation*}
with
\[
\mathcal{P}(z,B)=p_1(z)\int_0^{\infty}1_B(z+\theta e_1)h_1(\theta)d\theta\\ +p_2(z)\int_0^{\infty}1_B(z+\theta e_2)h_2(\theta)d\theta.
\]
Here $e_1$ and $e_2$ are the unit vectors from $\mathbb{R}^2$
\[
e_1=\left(
      \begin{array}{c}
        1 \\
        0 \\
      \end{array}
    \right),\quad
    e_2=\left(
          \begin{array}{c}
            0 \\
            1 \\
          \end{array}
        \right).
\]
Let $\mathbb{P}_z$ be the distribution  of the process $Z=\{Z(t)\}_{t\ge 0}$ starting at  $Z(0)=z$ and  $\mathbb{E}_z$ the corresponding expectation operator.
For any $z$ and any Borel subset of $\mathbb{R}^2_+$ we have
\[
\mathbb{P}_z(Z(t)\in B)=\sum_{n=0}^\infty \mathbb{P}_z(Z(t)\in B, t_n\le t< t_{n+1}).
\]
If the distribution of $Z(0)$ has a probability density $u_0$ with respect to the Lebesgue measure $m$ on $\mathbb{R}_+^2$ then $Z(t)$ has the distribution with density $P(t)u_0$, i.e.,
\begin{equation}\label{e:density}
\int_E  \mathbb{P}_z(Z(t)\in B)u_0(z)m(dz)=\int_B P(t)u_0(z)m(dz), \quad B\in  \mathcal{B}(\mathbb{R}_+^2).
\end{equation}
The evolution equation for the density $u(t,x,y)=P(t)u_0(x,y)$  is
\begin{equation}
\begin{split}
\label{eq:case3-operator-eqn}
 \dfrac{\partial u}{\partial t} -\gamma_x\dfrac{\partial (x u)}{\partial x} - \gamma_y\dfrac{\partial (y u)}{\partial y}& = -\varphi(x,y) u(t,x,y) \\
&\quad+  \varphi_1(y) \int_0^{x}  h_1(x-z_x)u(t,z_x,y)dz_x  \\
&\quad+\varphi_2(x) \int_0^{y}   h_2(y-z_y)u(t,x,z_y)dz_y,
\end{split}
\end{equation}
with initial condition $u(0,x,y)=u_0(x,y)$, $x,y\in [0,\infty)$.

\begin{thm} There is a unique  density $u_*(x,y)$ which is a stationary solution of \eqref{eq:case3-operator-eqn} and  $u(t,x,y)$ is asymptotically stable.
\end{thm}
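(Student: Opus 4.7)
The plan is to identify the family $\{P(t)\}_{t\ge 0}$ defined by \eqref{e:density} as the stochastic semigroup on $L^1(\mathbb{R}_+^2,m)$ associated with the PDMP $Z(t)=(x(t),y(t))$ described before the theorem, whose forward Kolmogorov equation is \eqref{eq:case3-operator-eqn}, and then to apply the asymptotic stability criterion for stochastic semigroups used in \cite{mackeytyran08} and \cite{pichorrudnicki00}. Three ingredients are needed: (i) conservativeness of $P(t)$; (ii) a Foster-Lyapunov / tightness estimate; and (iii) a partially integral representation of $P(t_0)$, for some $t_0>0$, with a strictly positive continuous lower bound on a non-empty open product set.

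For (i), conservativeness follows from the fact that the total rate $\varphi(x,y)=\gamma_x\kappa_{b,x}f_x(y)+\gamma_y\kappa_{b,y}f_y(x)$ is uniformly bounded in $(x,y)$, so the jump times $t_n$ satisfy $t_n\to\infty$ almost surely and no probability mass is lost; integrating \eqref{eq:case3-operator-eqn} against the constant function gives $\partial_t\int u\,dm=0$. For (ii), I would take $V(x,y)=x+y$ and compute the action of the generator:
\begin{equation*}
\mathcal{A}V(x,y)=-\gamma_x x-\gamma_y y+b_x\varphi_1(y)+b_y\varphi_2(x)\le -c\,V(x,y)+K,
\end{equation*}
where boundedness of the last two terms comes from $f_x,f_y\le 1$. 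This yields uniform-in-$t$ boundedness of $\mathbb{E}_z V(Z(t))$ and hence tightness of the trajectories $\{P(t)u_0\}_{t\ge 0}$ for every density $u_0$ with finite first moment, ruling out sweeping (``escape to infinity'') of the semigroup.

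For (iii) — the crux of the argument — I would analyze $Z$ over two successive jumps starting at $z=(x_0,y_0)$, conditioned on the event that the first jump is in the $X$-direction (producing a burst of size $\xi_1>0$ from $h_1$) and the second is in the $Y$-direction (producing a burst of size $\xi_2>0$ from $h_2$). This event has strictly positive probability on any compact set bounded away from the axes. Carrying out the change of variables $(t_1,t_2,\xi_1,\xi_2)\mapsto Z(t_2)$ with the explicit flow $\pi_t$ and computing the Jacobian yields a continuous, strictly positive density for the law of $Z(t_2)$ on an open subset of $\mathbb{R}_+^2$. Thus $P(t_0)$, for any $t_0$ larger than some fixed constant, splits as the sum of a singular operator and a partially integral operator whose kernel is bounded below by a continuous positive function on a non-empty open product set. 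Combined with (i)–(ii), Theorem 2 of \cite{pichorrudnicki00}, applied as in Sections \ref{sec:dynamics-single}–\ref{sec:dynamics-degrad} of this paper, then yields existence of a unique invariant density $u_*(x,y)$ and $\|P(t)u_0-u_*\|_{L^1}\to 0$ as $t\to\infty$ for every density $u_0$.

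The main obstacle is (iii): the single-jump kernel $\mathcal{P}(z,\cdot)$ is singular in $\mathbb{R}_+^2$ because each burst moves the state along a single coordinate axis, so no absolutely continuous component with a positive 2D density can be produced by one jump. The essential work is therefore to make the two-jump Jacobian computation rigorous and to pin down an explicit open product set on which the resulting density of $Z(t_2)$ admits a continuous strictly positive lower bound. Once this minorization is in hand, the remaining pieces are routine applications of the general framework already invoked elsewhere in the paper.
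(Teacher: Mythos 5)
Your proposal follows essentially the same route as the paper's proof: a two-jump minorization (first burst along $e_1$, second along $e_2$, with a change of variables in the jump times and burst amplitudes) producing a lower semicontinuous kernel lower bound for $\mathbb{P}_z(Z(t)\in\cdot)$, combined with the Lyapunov function $V(x,y)=x+y$ and the extended-generator bound $\mathcal{L}V\le -c_1V+c_2$ to rule out sweeping, and then the asymptotic-stability-versus-sweeping dichotomy for partially integral stochastic semigroups. The only discrepancy is the citation: the paper invokes Theorem 5 of \cite{rudnickipichortyran02} together with Theorem 1 of \cite{pichorrudnicki00}, not Theorem 2 of \cite{pichorrudnicki00} (which is the one-dimensional Fokker--Planck result used in Section \ref{sec:dynamics-degrad}), but this does not affect the substance of your argument.
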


\begin{proof}
We use the notation of \cite{rudnickipichortyran02} and apply  \cite[Theorem 5]{rudnickipichortyran02} together with \cite[Theorem 1]{pichorrudnicki00}. Let $E=\mathbb{R}^2_+$ and $m$ be the Lebesgue measure on $E$. The evolution equation \eqref{eq:case3-operator-eqn} induces a strongly continuous semigroup $\{P(t)\}_{t\ge 0}$ of Markov operators on the space of Lebesgue integrable functions $L^1=L^1(E,\mathcal{B}(E),m)$ (see e.g. \cite{tyran09}).
Recall that a function $f\colon E\to \mathbb{R}_+$ is called lower semicontinuous, if
\[
\liminf_{w\to z}f(w)\ge f(z)
\]
for every $z\in E$.
We show  that there exists a nonnegative Borel function $q$ defined on $(0,\infty)\times E\times E$  with the following properties
\begin{enumerate}
\item for each $t>0$ and each Borel set $B$
\[
\mathbb{P}_z(Z(t)\in B)\ge \int_B q(t,w,z)m(dw)\quad\text{for all } z\in E,%,\quad z\in E, B\in \mathcal{B}(E)
\]
\item for each $t>0$ the function $(w,z)\mapsto q(t,w,z)$ is lower semicontinuous,
\item for each $z$ there exists $t>0$ such that
\[
\int_{E}q(t,w,z)m(dw)>0,
\]
\item  for $m$-a.e. $z\in E$ and every Borel set $B$ with $m(B)>0$
\[
\int_0^{\infty} \int_B q(t,w,z)m(dw)dt>0.
\]
\end{enumerate}
Then it follows from \cite[Theorem 5]{rudnickipichortyran02} and  \cite[Theorem 1]{pichorrudnicki00}  that either $u(t,x,y)$ is asymptotically stable or the process $Z$ is sweeping from compact subsets of $E$, i.e.,
\begin{equation}\label{e:sweep}
\lim_{t\to\infty}\int_E \mathbb{P}_z(Z(t)\in F)u_0(z)m(dz)=0
\end{equation}
for all compact sets $F\subset E$ and all densities $u_0$.
We have
\[
\begin{split}
\mathbb{P}_z(Z(t)\in B)=\sum_{n=0}^\infty \mathbb{P}_z(\pi_{t-t_n}Z(t_n)\in B, t_n\le t< t_{n+1}).
\end{split}
\]
The discrete time process $(Z(t_n),t_n)_{n\ge 0}$ is Markov with transition probability $P((z,s),B\times I)=Q_z(B\times ((I-s)\cap \mathbb{R}_+))$ for Borel subsets $B$ of $\mathbb{R}^2_+$ and Borel subsets $I$ of $\mathbb{R}_+$,
where $Q$  is given by
\begin{multline*}
Q_z(B\times I)=\mathbb{P}_z(Z(t_1)\in B, t_1\in I)= \\
\sum_{i=1}^2\int_{I}\int_{0}^\infty 1_B(\pi_{s_1}z+\theta e_{i})\varphi(\pi_{s_1}z)e^{-\phi_z(s_1)}p_{i}(\pi_{s_1}z)h_{i}(\theta)d\theta ds_1.
\end{multline*}
 We have
 \begin{equation*}
\mathbb{P}_z(\pi_{t-t_1}(Z(t_1))\in B, t_1\le t< t_{2})=
\int_{E\times [0,t]} 1_B(\pi_{t- s_1}z_1) e^{-\phi_{z_1}(t-s_1)}Q_z(dz_1,ds_1)
\end{equation*}
and for $k=2$ we obtain
\begin{multline*}
\mathbb{P}_z(\pi_{t-t_2}(Z(t_2))\in B, t_2\le t< t_{3})=\\
\int_{E\times [0,t]} \int_{E\times [0,t-s_1]}1_B(\pi_{t-(s_2+s_1)}z_2) e^{-\phi_{z_2}(t-(s_2+s_1))}Q_{z_1}(dz_2,ds_2)Q_z(dz_1,ds_1).
\end{multline*}
Since $\varphi$ is bounded from above by a constant $\overline{\varphi}$ and $\varphi p_i=\varphi_i$ is bounded from below by a constant $c_i>0$, we obtain that
\begin{equation*}
Q_z(B\times I)\ge
\int_{I}\int_{0}^\infty\sum_{i=1}^2 1_B(\pi_{s_1}z+\theta e_{i})e^{-\overline{\varphi} s_1}c_i h_{i}(\theta)d\theta ds_1
\end{equation*}
for all $z$.
Now, if $z_1=\pi_{s_1}z+\theta_1e_{1}$ and $z_2=\pi_{s_2}z_1+\theta_2e_{2}$, then
\[
\begin{split}
\pi_{t-(s_2+s_1)}z_2&=\pi_tz+\pi_{t-s_1}(\theta_1e_{1})+\pi_{t-(s_2+s_1)}(\theta_2e_{2})\\
&=\pi_tz+T_{(s_1,s_2)}(\theta_1,\theta_2),
\end{split}
\]
where
\[
T_{(s_1,s_2)}(\theta_1,\theta_2)=(\theta_1e^{-\gamma_1(t-s_1)}, \theta_2e^{-\gamma_2(t-(s_2+s_1))}).
\]
Consequently, we obtain
\begin{multline*}
\mathbb{P}_z(Z(t)\in B)\ge\\ \int_{0}^t \int_{0}^{t-s_1}\int_{0}^{\infty}\int_{0}^{\infty} 1_B(\pi_{t}z+T_{(s_1,s_2)}(\theta_1,\theta_2))e^{-\overline{\varphi}t}c_1c_2h_1(\theta_1)h_2(\theta_2)d\theta_2d\theta_1 ds_2ds_1.
\end{multline*}
The transformation $(\theta_1,\theta_2)\mapsto T_{(s_1,s_2)}(\theta_1,\theta_2)$ is invertible on $(0,\infty)^2$, thus we can make a change of variables under the integral to conclude that
\[
\mathbb{P}_z(Z(t)\in B)\ge \int_{(0,\infty)^2} 1_{B}(\pi_tz+w)e^{-(\overline{\varphi}+\gamma_1+\gamma_2)t}\widetilde{q}(t,w)dw,
\]
where
\begin{multline*}
\widetilde{q}(t,(w_1,w_2))=\\ \int_{0}^t \int_{0}^{t-s_1}e^{\gamma_1 s_1+\gamma_2(s_1+s_2)}c_1c_2
h_1(e^{\gamma_1(t-s_1)}w_1)h_2(e^{\gamma_2(t-s_1-s_2)}w_2)ds_2ds_1.
\end{multline*}
Consequently, we obtain
\[
\mathbb{P}_z(Z(t)\in B)\ge \int_{B} q(t,w,z) dw,
\]
where
\[
q(t,w,z)=e^{-(\overline{\varphi}+\gamma_1+\gamma_2)t}\widetilde{q}(t,w-\pi_tz)1_{(0,\infty)^2}(w-\pi_tz), \quad w,z\in E.
\]
For each $t>0$ the function $(w,z)\mapsto q(t,w,z)$ is lower semicontinuous and
\[
\int_{E}q(t,w,z)dw>0
\]
for every   $z$. Finally, to check the last condition note that $\pi_t z$ converges to zero as $t\to \infty$ for every $z$. Thus, for every $z\in E$ and $w\in (0,\infty)^2$ we can find  $t_0>0$ such that $w-\pi_tz\in  (0,\infty)^2$ for every $t\ge t_0$, which implies that
\[
\int_0^\infty q(t,w,z)dt>0
\]
for all $z\in E$ and $w\in (0,\infty)^2$.

Next, we show that the process is not sweeping from compact subsets. Suppose, contrary to our claim, that the process is sweeping. It follows from \eqref{e:sweep} that for every compact set $F$ and every density $u_0$ we have
\[
\lim_{t\to\infty}\frac{1}{t}\int_0^t \int_E \mathbb{P}_z(Z(s)\in F)u_0(z)m(dz)=0.
\]
Chebyshev inequality implies that
\[
\mathbb{P}_z(Z(t)\in F_a)\ge 1-\frac{1}{a}\mathbb{E}_zV(Z(t))
\]
for all $t>0$, $z\in E$, and $a>0$, where $V$ is a nonnegative measurable function and $F_a=\{z\in E: V(z)\le a\}$. To get a contradiction it is enough to show that
 \[
\limsup_{t\to\infty}\frac{1}{t}\int_0^t\int_E \mathbb{E}_zV(Z(s))u_0(z)m(dz)ds<\infty
\]
for a density $u_0$ and a continuous function $V$ such that each  $F_a$ is a compact subset of $E$.
%Let $\mathcal{E}$ denote the class of measurable functions $V\colon E\to \mathbb{R}$.
Recall that an operator  $\mathcal{L}$ is the extended generator of the Markov process $Z$, if
its domain  $\mathcal{D}(\mathcal{L})$ consists of those measurable $V\colon E\to \mathbb{R}$ for which there exists a measurable $U\colon E\to \mathbb{R}$ such that for each $z\in E$, $t>0$,
\[
\mathbb{E}_z(V(Z(t)))=V(z)+\mathbb{E}_z\left(\int_{0}^t U(Z(s))\,ds\right)
\]
and
\[
\int_{0}^t \mathbb{E}_z(|U(Z(s))|)ds<\infty,
\]
in which case we define $\mathcal{L}V=U$. From \cite[Theorem 26.14 and Remark 26.16]{davis93} it follows that
\[
\mathcal{L}V(z)=\mathcal{L}_0V(z)+\varphi(z)\int_E( V(w)-V(z))\mathcal{P}(z,dw),
\]
where for $z=(x,y)$ we have
\[
\mathcal{L}_0V(x,y)=-\gamma_1 x\frac{\partial V}{\partial x}(x,y)-\gamma_2 y\frac{\partial V}{\partial y}(x,y)
\]
and that
$V$  belongs to the domain of $\mathcal{L}$ if the function $t\mapsto V(\pi_t(x,y))$  is absolutely continuous and for each $t$ %and $z\in E$
\[
\mathbb{E}(\sum_{n: t_n\le t}|V(Z(t_n))-V(Z(t_n-))|)<\infty.
\]
Observe that for  $V(x,y)=x+y$ this condition holds and we obtain
\[
\mathcal{L}V(x,y)=-(\gamma_x+\gamma_y)V(x,y)+ \varphi(x,y) (b_x p_1(x,y)+b_y p_2(x,y)).
\]
Consequently, there are positive constants $c_1$ and $c_2$ such that for all $z=(x,y)$ we have
\[
\mathcal{L}V(z)\le -c_1 V(z)+ c_2,
\]
which implies that
\[
\mathbb{E}_z(V(Z(t)))\le V(z)-c_1\int_{0}^t\mathbb{E}_zV(Z(s))\,ds+c_2 t.
\]
Hence, for each $t>0$ and $z\in E$ we have
\[
\frac{1}{t}\int_{0}^t\mathbb{E}_zV(Z(s))\,ds\le \frac{c_2}{c_1}+\frac{1}{c_1t}V(z).
\]
Taking a density $u_0$ with $\int_E V(z)u_0(z)m(dz)<\infty$  completes the proof.
\end{proof}

\begin{rem}
Observe that
\[
\frac{\partial }{\partial x}\int_0^{x}  h_1(x-z_x)u_*(z_x,y)dz_x=\frac{1}{b_x}\left(u_*(x,y)-\int_0^{x}  h_1(x-z_x)u_*(z_x,y)dz_x\right).
\]
Thus the equation for the stationary density $u_*(x,y)$ can be rewritten as
\begin{multline*}
\frac{\partial }{\partial x}\left(\gamma_x xu_*(x,y)-\varphi_1(y) e^{-x/b_x}\int_0^x e^{z_x/b_x} u_*(z_x,y)dz_x\right)\\
+\frac{\partial }{\partial y}\left(\gamma_y yu_*(x,y)-\varphi_2(x) e^{-y/b_y}\int_0^y e^{z_y/b_y} u_*(x,z_y)dz_y\right)=0.
\end{multline*}
However, we have been unable to find an analytic solution to this equation.
\end{rem}

\section{Discussion and conclusions}

Here we have considered the behavior of a bistable molecular switch in both  its deterministic version as well
as what happens in the presence of two different kinds of noise.  The results that we have obtained in the
presence of noise are, unfortunately, only partial due to the analytic difficulties in solving for the
stationary density but we have been able to offer analytic expressions for  $u_*(x)$ either in the presence of transcriptional and/or translational bursting (Section \ref{sec:dynamics-single}) or in the presence of Gaussian noise on the degradation rate (Section \ref{sec:dynamics-degrad}) when there is a single dominant slow variable. { We have shown that in both cases one cannot distinguish between the source of the noise based on the nature of the stationary density.} In the situation where there are two dominant slow variables (Section \ref{sec:2-dom-burst}) we have established the asymptotic stability of $u(t,x,y)$, and thus the uniqueness of the stationary density $u_*(x,y)$.

\begin{acknowledgements}
This work was supported by the Natural Sciences and Engineering Research Council (NSERC, Canada)  and the Polish NCN grant no 2014/13/B/ST1/00224.  We are grateful to Marc Roussel (Lethbridge), Romain Yvinec (Tours) and
Changjing Zhuge (Tsinghua University, Beijing) for helpful comments on this problem.  We are especially indebted to the referees and the Associate Editor for their comments that have materially improved this paper.
\end{acknowledgements}

%%\bibliographystyle{spmpsci}
%\bibliographystyle{spbasic}
%%\bibliographystyle{plain}
%%\bibliographystyle{BJ}
%%\bibliographystyle{unsrt}
%
%
%\bibliography{zpf2-BISTABLE}
%

\end{document}